\newcounter{note}[section]
\newtheorem{theorem}{Theorem}
\newtheorem{definition}{Definition}
\newtheorem{lemma}[theorem]{Lemma}
\def\tr{{\rm tr}} 
\def\E{\mathbb{E}} 
\def\Pr{{\rm Pr}} 
\def\R{{\mathbb{R}}} 
\def\Q{{\mathbb{Q}}} 
\newcommand{\junk}[1]{}
\def\b0{{\bf 0}}
\DeclareMathOperator{\vol}{vol}
\DeclareMathOperator{\disc}{disc}
\DeclareMathOperator{\herdisc}{herdisc}
\DeclareMathOperator{\detlb}{detlb}
\DeclareMathOperator{\diag}{diag}
\DeclareMathOperator{\vspan}{span}
\DeclareMathOperator{\conv}{conv}
\newcommand{\NP}{\mathsf{NP}}
\newcommand{\eqdef}{\stackrel{def}{=}}
\newcommand{\mvs}[2]{\mathrm{MVS}_{#1}(#2)}
\newcommand{\msd}[2]{\mathrm{MSD}_{#1}(#2)}
\title{Randomized Rounding for the Largest Simplex Problem}
\author{Aleksandar Nikolov\\{Microsoft Research}\\{\small Redmond, WA, USA}}
\date{}
\begin{document}
\maketitle

\begin{abstract}
  The maximum volume $j$-simplex problem asks to compute the
  $j$-dimensional simplex of maximum volume inside the convex hull of
  a given set of $n$ points in $\Q^d$. We give a deterministic
  approximation algorithm for this problem which achieves an
  approximation ratio of $e^{j/2 + o(j)}$. The problem is known to be
  $\NP$-hard to approximate within a factor of $c^{j}$ for some
  constant $c > 1$. Our algorithm also gives a factor $e^{j + o(j)}$
  approximation for the problem of finding the principal $j\times j$
  submatrix of a rank $d$ positive semidefinite matrix with the
  largest determinant. We achieve our approximation by rounding
  solutions to a generalization of the $D$-optimal design problem, or,
  equivalently, the dual of an appropriate smallest enclosing
  ellipsoid problem. Our arguments give a short and simple proof of a
  restricted invertibility principle for determinants.
\end{abstract}

\section{Introduction}

In the maximum volume $j$-simplex ($j$-MVS) problem we are given a set
of $n$ vectors $v_1, \ldots, v_n$ in $\Q^d$, and the goal is to find a
maximum volume $j$-dimensional simplex in the convex hull of $v_1,
\ldots, v_n$. This problem was introduced by Gritzmann, Klee, and
Larman~\cite{GritzmannKL95}, and a number of applications are
mentioned by Gritzmann and Klee~\cite{GritzmannK94}. It is a problem
of natural interest in computational geometry, as a maximum volume
simplex inside a convex body $K$ can be seen as a simpler
approximation of $K$. This is analogous to the John ellipsoid,
i.e. the maximum volume ellipsoid contained in $K$, which can also be
interpreted as a simple approximation of $K$. Depending on the
geometry of $K$, the simplex or the ellipsoid approximation may be
more appropriate.

The $j$-MVS problem can be easily reduced to a a problem about
subdeterminants of positive semidefinite matrices (see
Lemma~\ref{lm:mvs2msd} for the reduction). For an $m\times n$
matrix $M$, let $M_{S, T}$ be the submatrix with rows indexed by $S
\subseteq [m]$ and columns indexed by $T \subseteq [n]$. In the maximum
$j$-subdeterminant problem ($j$-MSD) we are given an $n \times n$
\emph{positive semidefinite} matrix $M$ of rank $d$, and the goal is
find a set $S$ of cardinality $j$ so that $\det M_{S, S}$ is
maximized. The $j$-MVS problem in $d$ dimensions can be reduced to
solving $n$ instances of the $j$-MSD problem for matrices of rank $d$,
and the reduction is approximation preserving. 

The $j$-MSD problem was also independently studied in the context of
low-rank approximations. The optimal row-rank approximation of a
matrix $A$ is well understood, and for both the operator and the
Frobenius norm (and in fact any unitarily invariant matrix norm) is
given by the the projection of the rows and columns of $A$ onto the
top singular vectors. However, an approximation in terms of a
submatrix of $A$ often has a better explanatory value. For example, if
$A$ is a $n\times d$ matrix in which each row is a data point, and if
$A$ is well-approximated by its projection onto the span of $j$ of its
columns, we can argue that, at least intuitively, these columns
represent important features in the data. Goreinov and
Tyrtyshnikov~\cite{GoreinovT01} gave one formalization of this
intuition, which we cite next.

\begin{theorem}[\cite{GoreinovT01}]
  Let $M\succeq 0$ be an $n\times n$ matrix and let $S \subseteq [n]$ be
  an optimal solution to the $j$-MSD problem for  $M$. Then, for $T =
  [n]\setminus S$ we have
  \[
  |(M_{T,T} - M_{T,S}M_{S,S}^{-1}M_{T,T})_{i,k}| \leq (j+1)\sigma_{j+1},
  \]
  where $\sigma_{j+1}$ is the $(j+1)$-st largest singular value of
  $M$.
\end{theorem}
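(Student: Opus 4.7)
The plan is to recognize that the matrix in the displayed inequality is the Schur complement
\[
N = M_{T,T} - M_{T,S} M_{S,S}^{-1} M_{S,T},
\]
which inherits positive semidefiniteness from $M$. Because every $2 \times 2$ principal submatrix of a PSD matrix is PSD, one has $|N_{i,k}| \leq \sqrt{N_{i,i}\, N_{k,k}} \leq \max_{\ell \in T} N_{\ell,\ell}$, so the off-diagonal bound reduces to a diagonal bound.

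For the diagonal entries, the Schur determinant identity gives $N_{i,i} = \det M_{S \cup \{i\}, S \cup \{i\}}/\det M_{S,S}$. Writing $\tau_\ell := \max_{|R| = \ell} \det M_{R,R}$, optimality of $S$ gives $\det M_{S,S} = \tau_j$, while trivially $\det M_{S \cup \{i\}, S \cup \{i\}} \leq \tau_{j+1}$. Hence it suffices to prove the subset-free inequality
\[
\tau_{j+1}/\tau_j \;\leq\; (j+1)\, \sigma_{j+1}(M).
\]

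To prove this, I would fix a $(j+1)$-subset $T^*$ achieving $\tau_{j+1}$ and evaluate $\tr(M_{T^*, T^*}^{-1})$ in two ways. By Cramer's rule, $(M_{T^*,T^*}^{-1})_{s,s} = \det M_{T^* \setminus \{s\}, T^* \setminus \{s\}}/\det M_{T^*, T^*} \leq \tau_j/\tau_{j+1}$ for every $s \in T^*$, so summing yields $\tr(M_{T^*, T^*}^{-1}) \leq (j+1)\,\tau_j/\tau_{j+1}$. On the other hand, Cauchy interlacing for principal Hermitian submatrices gives $\lambda_{j+1}(M_{T^*, T^*}) \leq \lambda_{j+1}(M) = \sigma_{j+1}(M)$, so $\tr(M_{T^*, T^*}^{-1}) \geq 1/\lambda_{j+1}(M_{T^*, T^*}) \geq 1/\sigma_{j+1}(M)$. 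Chaining the two inequalities proves the claim.

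The only obstacle I foresee is cosmetic: the degenerate case $\tau_j = 0$ must be handled separately, but then $\mathrm{rank}(M) < j$, so $\sigma_{j+1}(M) = 0$ and both sides of the theorem vanish. Otherwise the argument needs only Schur's identity, Cramer's rule, and Cauchy interlacing --- elementary and self-contained, with no heavy machinery required. Notably, the key inequality $\tau_{j+1}/\tau_j \leq (j+1)\sigma_{j+1}(M)$ does not actually use that $S$ is optimal for the $j$-MSD problem; optimality enters only in identifying $\det M_{S,S}$ with $\tau_j$.
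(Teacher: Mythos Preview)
The paper does not actually prove this theorem; it is quoted from Goreinov and Tyrtyshnikov~\cite{GoreinovT01} as background in the introduction, with no argument supplied. So there is no ``paper's own proof'' to compare against.

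That said, your argument is correct and is essentially the standard one. You correctly recognize (and silently fix) the typo in the displayed formula --- the last $M_{T,T}$ should be $M_{S,T}$ so that $N$ is the Schur complement of $M_{S,S}$ in $M$. The reduction to diagonal entries via $|N_{i,k}|\le\sqrt{N_{i,i}N_{k,k}}$, the identification $N_{i,i}=\det M_{S\cup\{i\},S\cup\{i\}}/\det M_{S,S}$, and the trace-of-inverse trick combined with Cauchy interlacing to bound $\tau_{j+1}/\tau_j$ are all valid and cleanly organized.

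One small addition to your degenerate-case discussion: you handle $\tau_j=0$, but you should also note separately the case $\tau_{j+1}=0$ with $\tau_j>0$. There your trace argument does not literally apply because $M_{T^*,T^*}$ is singular, but the conclusion is immediate anyway: $\tau_{j+1}=0$ forces $e_{j+1}(\lambda(M))=0$ (all $(j{+}1)$-principal minors of a PSD matrix are nonnegative and sum to $e_{j+1}$), hence $\sigma_{j+1}(M)=0$, and on the other side $N_{i,i}\le \tau_{j+1}/\tau_j=0$ together with $N\succeq 0$ gives $N=0$. So both sides vanish and the inequality holds trivially.
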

To put the theorem in the context of the prior discussion, let $A$ be
an $n\times d$ matrix with row vectors $a_1, \ldots, a_n \in \R^d$,
and define $M \eqdef AA^T$. Then the theorem says that, for all $i,
k$, $|\langle a_i, a_k\rangle - \langle \Pi a_i, \Pi a_k\rangle| \leq
(j+1)\sigma_{j+1}$, where $\Pi$ is the orthogonal projection matrix
onto $\vspan\{a_i: i \in S\}$.

Another area where the $j$-MSD problem arises is combinatorial
discrepancy theory. The \emph{discrepancy} of a $d\times
n$ matrix $A$ is $\disc(A) \eqdef \min_{x \in \{-1,
  1\}^n}{\|Ax\|_\infty}$; its \emph{hereditary discrepancy} is
$\herdisc(A) \eqdef \max_{S\subseteq[n]}{\disc(A_S)}$, where $A_S$ is
shorthand for $A_{[d], S}$. The following
$\ell_2$-norm variants of these definitions were considered by
Srinivasan~\cite{Srinivasan97},
Matou\v{s}ek~\cite{Matousek98-Lp-beckfiala}, and, in the context of
differential privacy, by the author, Talwar, and Zhang~\cite{NTZ}:
\begin{align*}
  \disc_2(A) &\eqdef \min_{x \in \{-1, 1\}^n}{\|Ax\|_2},\\
  \herdisc_2(A) &\eqdef \max_{S \subseteq [n]} \disc_2(A_S).
\end{align*}
Hereditary discrepancy has an important application to rounding
problems: roughly speaking, if the hereditary discrepancy of the
matrix $A$ is low, then any solution to a $x$ linear system $Ax = b$
can be rounded to an integral vector $\bar{x}$ without introducing a
lot of error. In this sense, hereditary discrepancy generalizes total
unimodularity, which is equivalent to $\herdisc(A) =
1$~\cite{GH-tum}. The following theorem, proved by Lov\'asz, Spencer,
and Vesztergombi, makes this connection precise.
\begin{theorem}[\cite{LSV}]
  For any $d\times n$ matrix $A$, and any $y \in \R^n$, there exists a
  vector $x \in \mathbb{Z}^n$ so that 
  \[
  \|Ax - Ay\|_2 \leq 2\herdisc_2(A).
  \]
\end{theorem}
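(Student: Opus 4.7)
The plan is to use the classical iterative dyadic rounding argument. First I would reduce to the case $y \in [0,1)^n$ by writing $y = \lfloor y \rfloor + \{y\}$ coordinate-wise; the integer part $\lfloor y \rfloor \in \Z^n$ contributes nothing to $Ax - Ay$ once we replace $x$ by $x - \lfloor y \rfloor$, so it suffices to round $\{y\}$. Next, since $A$ is a fixed matrix, $\|A(\{y\} - y^{(K)})\|_2$ can be made arbitrarily small by approximating $\{y\}$ by a vector $y^{(K)}$ whose coordinates are multiples of $2^{-K}$ for $K$ large enough. So the real content is to round such a finite-precision dyadic vector to an integer vector.

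I would then round $y^{(K)}$ in $K$ stages, erasing one bit at a time from the least significant upward. At stage $k$ (going from $k = K$ down to $k = 1$), given $y^{(k)}$ with all coordinates in $2^{-k}\Z$, let
\[
S_k \eqdef \{i \in [n] : 2^{k-1} y^{(k)}_i \notin \Z\}
\]
be the set of coordinates whose $k$-th bit is $1$. By the definition of $\herdisc_2(A)$ applied to the column submatrix $A_{S_k}$, there exists a sign vector $s \in \{-1,+1\}^{S_k}$ with $\|A_{S_k} s\|_2 \leq \herdisc_2(A)$. Define $y^{(k-1)}_i = y^{(k)}_i + s_i \cdot 2^{-k}$ for $i \in S_k$ and $y^{(k-1)}_i = y^{(k)}_i$ otherwise. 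Each coordinate of $y^{(k-1)}$ is now a multiple of $2^{-(k-1)}$, and
\[
\|A y^{(k-1)} - A y^{(k)}\|_2 = 2^{-k}\|A_{S_k} s\|_2 \leq 2^{-k}\herdisc_2(A).
\]

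After $K$ stages the vector $y^{(0)}$ is integral, and summing the per-stage bound by the triangle inequality yields
\[
\|A y^{(0)} - A y^{(K)}\|_2 \leq \sum_{k=1}^K 2^{-k}\herdisc_2(A) < \herdisc_2(A).
\]
Combining with the dyadic approximation step and the previously peeled off integer part, we obtain $x \in \Z^n$ with $\|Ax - Ay\|_2 \leq \herdisc_2(A) + \eps$ for any $\eps > 0$; the factor of $2$ in the stated bound comfortably absorbs this slack (and, if desired, a finiteness argument over lattice points within a fixed ball around $y$ actually yields the sharper bound $\herdisc_2(A)$). The conceptual point is that the choice at each bit level is precisely a $\pm 1$ coloring of a subset of columns of $A$, which is exactly what $\herdisc_2(A)$ controls, so the errors across levels form a geometric series. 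The only real bookkeeping obstacle is that the sets $S_k$ are not fixed in advance — carries generated at stage $k$ change which coordinates have a $1$ at bit $k-1$ — but this causes no problem since $S_{k-1}$ is simply recomputed from $y^{(k-1)}$, and the $\herdisc_2$ bound applies uniformly to every column subset.
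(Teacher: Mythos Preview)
The paper does not give its own proof of this theorem; it is stated with a citation to~\cite{LSV} and then used as a black box. Your argument is precisely the standard Lov\'asz--Spencer--Vesztergombi dyadic rounding proof (adapted in the obvious way to the $\ell_2$ norm), and it is correct as written. Two minor remarks: first, your parenthetical about a ``finiteness argument over lattice points'' to remove the $\eps$ is not always available, since $A\Z^n$ need not be a discrete subset of $\R^d$ when $A$ has irrational entries---but you do not need it, as the stated constant $2$ easily absorbs the slack. Second, the observation about carries is well put and exactly right: $S_{k-1}$ is determined by $y^{(k-1)}$, and the hereditary bound applies to whatever column set arises.
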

In fact the theorem holds with hereditary discrepancy defined in terms
of any norm. 

Another important result by Lov\'asz, Spencer, and Vesztergombi is a
general lower bound on hereditary discrepancy. For the $\ell_2$
version of hereditary discrepancy, the relevant quantity is
\[
\detlb_2(A) \eqdef \sqrt{|S|} \det((A^TA)_{S,S})^{1/2|S|}.
\]
The following theorem shows that $\detlb_2(A)$ gives nearly-tight
upper and lower bound on $\herdisc_2(A)$. While not explicitly stated in this form, the
theorem can be proved by modifying the arguments
in~\cite{LSV, Matousek11} in a straightforward way.
\begin{theorem}
  There exists a constant $C$ such that for any $d\times n$ matrix $A$, 
  \[
  \frac{1}{C} \detlb_2(A) \leq \herdisc(A) \leq (C\log d)\detlb_2(A).
  \]
\end{theorem}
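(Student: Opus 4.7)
The statement contains two inequalities which I would prove separately. (I read the undecorated \(\herdisc(A)\) in the display as a typo for \(\herdisc_2(A)\), since the surrounding paragraph explicitly refers to the \(\ell_2\) version.)

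\emph{Lower bound, $\herdisc_2(A)\ge \detlb_2(A)/C$.} Fix $S\subseteq[n]$ attaining the max in $\detlb_2(A)$, let $k=|S|$, and assume WLOG that $A_S$ has full column rank (otherwise that $S$ contributes $0$). Consider the lattice $L := A_S\mathbb{Z}^S$, which lives inside the $k$-dimensional subspace $\vspan(A_S)\subseteq\R^d$ and has lattice determinant $\det(A_S^T A_S)^{1/2}$. Apply the LSV rounding theorem from the excerpt to the $d\times k$ matrix $A_S$: for every $y\in\R^S$ there exists $x\in\mathbb{Z}^S$ with $\|A_S x - A_S y\|_2 \le 2\herdisc_2(A_S)\le 2\herdisc_2(A)$. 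In particular, the Euclidean covering radius $\mu_2(L)$ is at most $2\herdisc_2(A)$. Conversely, since translates of the Euclidean ball of radius $\mu_2(L)$ around all lattice points cover $\vspan(L)\cong\R^k$, we have $V_k\,\mu_2(L)^k\ge\det(L)$; combined with $V_k\le(2\pi e/k)^{k/2}$ this yields $\mu_2(L)\gtrsim\sqrt{k}\,\det(A_S^T A_S)^{1/(2k)}$. Chaining the two bounds gives $\herdisc_2(A)\gtrsim\sqrt{k}\,\det(A_S^T A_S)^{1/(2k)}$, which is the desired inequality.

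\emph{Upper bound, $\herdisc_2(A)\le C\log d\cdot\detlb_2(A)$.} Here I would adapt Matou\v{s}ek's refinement of the LSV partial-coloring framework. The reduction is to a hereditary partial-coloring lemma of the form: for every column subset $T\subseteq[n]$ of size $m$ there exists $x\in[-1,1]^T$ with at least $m/2$ coordinates in $\{-1,1\}$ and $\|A_T x\|_2\lesssim\detlb_2(A_T)$. Given such a lemma one iterates: freeze the $\pm1$ coordinates produced at each round, reapply the lemma to the at-most-half that remain fractional, and sum the resulting $\ell_2$-errors. Because $\detlb_2$ is monotone under column restriction and because after $O(\log d)$ halvings the fractional column set has size $\le d$ (at which point a direct random-signs bound handles the tail), the total $\ell_2$-discrepancy accumulated is $O(\log d)\cdot\detlb_2(A)$. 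Taking the max over $T$ of $\disc_2(A_T)$ then gives the claimed upper bound on $\herdisc_2(A)$.

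The technically substantive step — and the place where the proof genuinely \emph{modifies} Matou\v{s}ek's argument — is establishing the partial coloring lemma. Using the entropy method of Beck, I would produce a centrally symmetric convex body $K\subseteq\R^d$ satisfying (i) $|\{x\in\{-1,1\}^T:A_T x\in K\}|\ge 2^{m/2}$, and (ii) $K-K\subseteq B_2\!\bigl(c\cdot\detlb_2(A_T)\bigr)$; any two colorings $x,x'$ landing in the same translate of $K$ then give a partial coloring $(x-x')/2$ whose $A_T$-image lies in $K-K$. The right choice of $K$ is an ellipsoid governed by the optimal $D$-design weighting $p\in\Delta_{|T|}$ on the columns of $A_T$ (equivalently, the dual of the smallest enclosing ellipsoid of those columns). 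Its volume can be controlled by $\det\bigl(\sum_i p_i a_i a_i^T\bigr)$, and the expansion of this determinant via Cauchy–Binet together with AM-GM on the resulting $k\times k$ minors is exactly what converts the ellipsoid volume bound into a $\detlb_2(A_T)$ bound. This is the main obstacle; the rest is routine iteration.
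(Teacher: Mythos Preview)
The paper itself does not give a proof of this theorem: it only says that it ``can be proved by modifying the arguments in~\cite{LSV, Matousek11} in a straightforward way.'' So there is no in-paper argument to compare against, and your sketch is your own reconstruction of what that modification should look like. Your reading of $\herdisc$ as $\herdisc_2$ is almost certainly right, given the surrounding text.

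Your lower-bound argument is correct and is the standard LSV determinant bound transported to $\ell_2$: the rounding theorem bounds the covering radius of $A_S\mathbb{Z}^S$, and the elementary covering-volume inequality $V_k\,\mu_2(L)^k\ge\det(L)$ together with $V_k\le(2\pi e/k)^{k/2}$ turns this into the stated lower bound. Nothing to add there.

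The upper-bound sketch has the right skeleton (hereditary partial coloring plus iteration, as in Matou\v{s}ek), but there is a genuine gap in how you get the factor $\log d$ rather than $\log n$. You write that ``after $O(\log d)$ halvings the fractional column set has size $\le d$.'' That is false as stated: halving the active set from size $n$ down to $d$ takes $\log_2(n/d)$ rounds, and $n$ is unbounded in terms of $d$. What is missing is the linear-algebraic reduction that collapses $n$ active columns to at most $d$ at \emph{zero} cost: whenever more than $d$ columns are still fractional, the restriction of $A$ to those columns has a nontrivial kernel, and one walks along a kernel direction until some coordinate hits $\pm 1$, without changing $Ax$ at all (this is the Beck--Fiala floating-variable step, and it is implicit in the LSV rounding theorem you quoted). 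Only \emph{after} this free reduction to $\le d$ active columns does the partial-coloring halving begin, and then $O(\log d)$ rounds suffice. Relatedly, your ``direct random-signs bound handles the tail'' is not right either: random signs on $m\le d$ columns gives $\|A_T x\|_2$ of order $\|A_T\|_{HS}$, which can be as large as $\sqrt{d}\cdot\detlb_2(A_T)$, not $O(\detlb_2)$. No separate tail step is needed once you iterate the partial coloring all the way down.

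On the partial-coloring lemma itself: routing the construction through the $D$-optimal design is a nice idea and thematically consistent with the paper, but as written it is too vague to be checkable. In particular, your Cauchy--Binet expansion of $\det\bigl(\sum_i p_i a_i a_i^T\bigr)$ only makes sense when this $d\times d$ matrix is full rank, i.e.\ when $m\ge d$; once the kernel reduction has brought you to $m\le d$ the matrix is singular and the argument as you describe it collapses. You would need to work instead with the $m\times m$ Gram matrix $A_T^TA_T$ (and correspondingly with ellipsoids in $\R^m$ rather than $\R^d$), and spell out why the entropy/volume condition that guarantees many colorings in $K$ is compatible with $K-K$ sitting inside a ball of radius $O(\detlb_2(A_T))$. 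That bridge is the substantive content of the lemma and is not yet visible in your sketch.
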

If for each $1 \leq j \leq d$ we have a factor $\alpha(j)$
approximation for $j$-MSD, then we get a factor
$\alpha \eqdef \max_j \alpha(j)^{1/2j}$-approximation to $\detlb_2(A)$, and, therefore, a factor
$C\alpha \log d$ approximation to $\herdisc_2(A)$. 

\subsection{Prior Work}

Koutis~\cite{Koutis06} showed that there exist constants $c > 1$ and
$0<\alpha < 1$ such that the $\alpha d$-MVS problem is $\NP$-hard to
approximate to within a factor $c^j$. The analogous hardness of
approximation for the $j$-MSD problem was proved by {\c{C}}ivril and
Magdon{-}Ismail~\cite{CivrilM13}. Recently, Di Summa, Eisenbrand,
Faenza, and Moldenhauer~\cite{DiSummaEFM14} showed that both the
$d$-MVS and $d$-MSD problems are $\NP$-hard to approximate to within a
factor of $c^{d}$, where $c$ is again a constant bigger than 1. By a
simple padding argument, this also implies that $j$-MVS and $j$-MSD are
$\NP$-hard to approximate to within a factor of $2^{cj}$ for any $j
\leq d$ such that $d = j^{O(1)}$. For $j$-MVS, for example, we can
take any instance of $j$-MVS in $\Q^j$ and embed it in any subspace of
$\Q^d$: this transformation  does not change the value of any
solution of the original instance and is in polynomial time as long as $d$
is polynomially related to $j$. We can also modify this reduction
to output full-dimensional instances without changing the hardness
factor substantially by adding a tiny perturbation to each point. For
$j$-MSD, we can take any instance of $j$-MSD over rank $j$ matrices,
and take the direct sum of the input matrix with a tiny multiple of
the $(d-j)\times (d-j)$ identity matrix. This transformation brings
the rank up to $d$ and can be performed in polynomial time as long as
$d$ is polynomially related to $j$.

On the algorithmic side, the best known approximation for $d$-MSD is
$(c\log d)^{d}$ for a constant $c$, proved by Di Summa et
al.~\cite{DiSummaEFM14}. They show that this approximation is achieved
by a classical algorithm by Khachiyan~\cite{Khachiyan95}, for which
they give a new analysis. This also implies a factor $(c\log d)^{d/2}$
approximation for $d$-MVS (see Lemma~\ref{lm:mvs2msd}). For $j < d$,
the best approximation known is of the form $(cj)^j$ for a constant
$c$: algorithms with this guarantee were given by
Packer~\cite{Packer04} for $j$-MVS and by {\c{C}}ivril and
Magdon{-}Ismail~\cite{CivrilM13} for $j$-MSD.

\subsection{Our Contribution}

 In this paper we design deterministic
polynomial time approximation algorithms for the $j$-MSD, and,
therefore, also the $j$-MVS, problems. Our main result is the
following theorem.
\begin{theorem}\label{thm:main}
  There exists a deterministic polynomial time algorithm which
  approximates the $j$-MSD problem within a factor of $e^{j +
    o(j)}$. This implies that there also exists a deterministic
  polynomial time algorithm which approximates the $j$-MVS problem
  within a factor of $e^{j/2 + o(j)}$
\end{theorem}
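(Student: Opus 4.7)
The plan is to invoke the $j$-MVS $\to$ $j$-MSD reduction (whose square-root loss in the approximation ratio -- because volume is the square root of the relevant subdeterminant -- accounts for the $e^j$ vs.\ $e^{j/2}$ gap between the two bounds) and then design an approximation algorithm for $j$-MSD. Writing $M = AA^T$ with $A \in \R^{n \times d}$ and rows $a_1, \ldots, a_n$, so that $\det(M_{S,S}) = \det(A_S A_S^T)$ for $|S| = j$, I would use the relaxation
\[
\mathrm{REL} \;=\; \max_{x \in [0,1]^n,\ \mathbf{1}^T x = j}\ e_j\!\bigl(\textstyle\sum_i x_i a_i a_i^T\bigr),
\]
where $e_j$ is the $j$-th elementary symmetric polynomial of the eigenvalues of the PSD argument. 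A Cauchy--Binet expansion on each $j\times j$ principal minor gives
\[
e_j\!\bigl(\textstyle\sum_i x_i a_i a_i^T\bigr) \;=\; \sum_{|S|=j}\Bigl(\textstyle\prod_{i\in S} x_i\Bigr)\det(A_S A_S^T),
\]
so $\mathrm{REL} \geq \mathrm{OPT}$ (take $x = \mathds{1}_{S^*}$ for the optimal set $S^*$). Because $\log e_j$ is concave on the PSD cone (Alexandrov--Fenchel for mixed discriminants), this is a concave maximization solvable to arbitrary precision in polynomial time.

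Given an approximately optimal $x^*$, I would round by sampling a set $S$ with $|S|=j$ with probability $p(S) \propto (\prod_{i \in S} x_i^*)\det(A_S A_S^T)$. Then
\[
\E[\det(A_S A_S^T)] \;=\; \frac{\sum_{|S|=j}(\prod_{i\in S} x_i^*)\det(A_S A_S^T)^2}{\mathrm{REL}},
\]
and applying Cauchy--Schwarz to $\mathrm{REL} = \sum_S \sqrt{\prod x_i^*}\cdot \sqrt{\prod x_i^*}\det(A_S A_S^T)$ yields $\mathrm{REL}^2 \leq e_j(x^*)\cdot \mathrm{REL}\cdot \E[\det(A_S A_S^T)]$, i.e.,
\[
\E[\det(A_S A_S^T)] \;\geq\; \mathrm{REL}/e_j(x^*) \;\geq\; \mathrm{OPT}/e_j(x^*).
\]
The constraints $\mathbf{1}^T x^* = j$, $x_i^* \leq 1$ together with Maclaurin's inequality give $e_j(x^*) \leq \binom{n}{j}(j/n)^j \leq j^j/j!$, which by Stirling equals $e^{j}/\sqrt{2\pi j}\,(1+o(1))$. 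This is the claimed $e^{j+o(j)}$ approximation factor (in expectation).

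To derandomize I would use the method of conditional expectations, iteratively fixing a variable $x_i$ to $0$ or $1$ according to which of $\E[\det(A_S A_S^T) \mid i \in S]$ and $\E[\det(A_S A_S^T) \mid i \notin S]$ is larger; each of these expectations can be written as a ratio of $e_j$-type polynomial evaluations on matrices obtained from $\sum_i x_i^* a_i a_i^T$ by deleting or contracting the $i$-th term, so they are computable in polynomial time. The main obstacle I expect is twofold: rigorously justifying concavity and polynomial-time solvability of the relaxation with sufficient numerical precision, and implementing the conditional expectations accurately. The approximation-ratio analysis itself reduces to one application of Cauchy--Schwarz plus Maclaurin's inequality, which also explains the abstract's claim of a short proof of a restricted invertibility principle for determinants.
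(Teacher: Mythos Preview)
Your relaxation and randomized analysis are correct and achieve the same $j!/j^j\sim e^{-j}\sqrt{2\pi j}$ factor as the paper, but by a genuinely different and in some ways more direct route. The paper does not use $\log e_j$ as the relaxation objective; instead it develops a ``$j$-L\"owner ellipsoid'' theory and derives, via Lagrange duality, a different concave objective $\Gamma_j(\sum_i c_i v_i v_i^T)$ over the simplex $\{c\geq 0:\sum_i c_i=j\}$, then rounds by \emph{independent} sampling with replacement and analyzes the expectation via Schur concavity of $e_j$. Your Cauchy--Binet identity $e_j(\sum_i x_i a_ia_i^T)=\sum_{|S|=j}(\prod_{i\in S}x_i)\det(M_{S,S})$ and the Maclaurin bound $e_j(x^*)\le j^j/j!$ sidestep all of that machinery. (In fact one can check $\log e_j(X)\ge \Gamma_j(X)$ pointwise, so your relaxation is looser, but the final approximation ratio is identical.)

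There is, however, a real gap in your derandomization. Under your volume-sampling distribution $p(S)\propto(\prod_{i\in S}x_i^*)\det(M_{S,S})$, the conditional expectation $\E[\det(M_{S,S})\mid i\in S]$ has numerator $\sum_{S\ni i}(\prod_{k\in S}x_k^*)\det(M_{S,S})^2$, a sum of \emph{squared} principal minors. This is not an ``$e_j$-type'' evaluation: the identity $\sum_S\det(M_{S,S})=e_j(\lambda(M))$ has no analogue for $\sum_S\det(M_{S,S})^2$, and I do not know a polynomial-time formula for it. Your claimed reduction to deletion/contraction only peels off the fixed indices and leaves the same squared-minor sum on a smaller matrix.

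The fix is immediate and actually simplifies your argument: drop the volume sampling and the Cauchy--Schwarz step. Under the product measure $q(S)\propto\prod_{i\in S}x_i^*$ you have $\E_q[\det(M_{S,S})]=\mathrm{REL}/e_j(x^*)$ \emph{exactly}, by definition of $\mathrm{REL}$ and $e_j(x^*)$. (Your Cauchy--Schwarz is precisely the inequality $\E_p[\det]\ge\E_q[\det]$, so nothing is lost.) Even more simply, sample $j$ indices independently with replacement from $p_i=x_i^*/j$, exactly as in the paper's Algorithm~\ref{alg:jdim}; then $\E[\det(M_{S,S})]=j!\sum_{|S|=j}(\prod_{i\in S}p_i)\det(M_{S,S})=\tfrac{j!}{j^j}\,\mathrm{REL}\ge\tfrac{j!}{j^j}\,\mathrm{OPT}$ with no further inequalities needed. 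Either way the conditional expectations now involve only linear sums $\sum_{S\supseteq T}(\prod_{i\in S\setminus T}x_i^*)\det(M_{S,S})$, which factor as $\det(M_{T,T})\cdot e_{j-|T|}$ of an explicit projected matrix and are computable from a characteristic polynomial, exactly as in the paper's Algorithm~\ref{alg:jdim-derand}.
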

This is the first approximation algorithm for $j$-MSD and $j$-MVS with
an approximation factor of the form $\exp(O(j))$, which matches the
known hardness results up to the constant in the exponent. It is
natural to conjecture that it is $\NP$-hard to approximate $j$-MSD
within a factor $e^{j-\epsilon}$ for any $\epsilon > 0$. We leave this
 as an open problem.

Theorem~\ref{thm:main} implies a factor $\sqrt{e} + o(1)$
approximation to $\detlb_2(A)$ for any $d\times n$ matrix $A$, and,
therefore, a factor $O(\log d)$ approximation to $\herdisc_2(A)$. The
latter result also follows from the techniques of the author and
Talwar~\cite{apx-disc}. However, our result gives the first
constant-factor approximation to a natural variant of the determinant
lower bound. It is an interesting open problem to extend this result
to the determinant lower bound for $\herdisc(A)$, which is equal to
\[
\detlb(A) \eqdef \max_{j = 1}^d\max_{S \in {[d] \choose j}} \max_{T
  \in {[n]\choose j}}{|\det(A_{S,T})|^{1/j}}.
\]

We also use our techniques to give an elementary and short proof of a
variant of the restricted invertibility principle of Bourgain and
Tzafriri~\cite{bour-tza}. 

\subsection{Techniques}

The first step of our algorithms for $j$-MSD is to take the Cholesky
factorization $M = V^TV$ of the input matrix $M$, and treat the column
vectors $v_1, \ldots, v_n$ of $V$ as points in $\R^d$. For the $d$-MSD
problem, we then apply a simple randomized rounding algorithm to an
(approximately) optimal solution of a variant of the $D$-optimal
design problem for $v_1, \ldots, v_n$, in which we maximize $\ln
\det(\sum c_i v_i v_i^T)$ over vectors $c \geq 0$ such that $\sum_i
c_i = d$.  It is well-known (and not hard to see: we give two
arguments in the proof of Theorem~\ref{thm:fulldim-main}) that this is
a convex relaxation of the $j$-MSD problem. We treat a feasible
solution to the $D$-optimal design problem as a ``fractional indicator
vector'' of a subset of $v_1, \ldots, v_n$. Our algorithm ``rounds''
the optimal such vector $c$ by simply sampling $d$ times with
replacement from the probability distribution on $v_1, \ldots, v_n$
induced by $\frac{1}{d} c$. A straightforward calculation using the
Binet-Cauchy formula proves the approximation guarantee holds in
expectation.  Interestingly, the dual of the $D$-optimal design
problem, the smallest enclosing ellipsoid problem (see
Section~\ref{s:john}), was used for approximating $d$-MSD in the work
of Khachiyan~\cite{Khachiyan95} and Di Summa et
al.~\cite{DiSummaEFM14}. However, we are not aware of any prior work
that uses our approach of rounding a solution to the $D$-optimal
design problem directly.

Our strategy for approximating $j$-MSD when $j < d$ is similar, but
the analysis becomes more complicated. For motivation, let us consider
the $j=1$ case, in which we simply need to compute the largest
diagonal entry of the input matrix $M$, or, working with the columns
$v_1, \ldots, v_n$ of the square root $V$ of $M$, we need to compute
the index $i$ such that $v_i$ has the largest squared Euclidean norm. Of
course, this problem can be solved trivially in linear time by
enumerating over the $v_i$, but it is instructive to solve it using an
approach similar to the one we used for $d$-MSD. Consider the smallest
enclosing ball problem for $v_1, \ldots, v_n$: minimize $r$ subject to
$v_1, \ldots, v_n$ being contained in a Euclidean ball of radius $r$
centered at $0$. It is clear that the optimal $r$ is equal to the norm
of the longest $v_i$. The dual of the smallest enclosing ball problem
is the problem of maximizing $\sum p_i v_i v_i^T = \sum p_i
\|v_i\|_2^2$ over probability vectors $p$ (a much more general version
of this fact is proved in Theorem~\ref{thm:jdim-duality}). This latter
problem is our convex relaxation of $1$-MSD. While this is a natural
relaxation that we could have arrived at directly, without going
through the smallest enclosing ball problem, our approach pays off
when generalizing to the case $1 < j < d$, in which it is not clear how to
come up directly with a natural convex relaxation of $j$-MSD.  The
randomized rounding algorithm applied to the relaxation samples an
index $i$ from the distribution determined by an optimal vector $p$;
the expected squared length of $v_i$ is $\sum p_i \|v_i\|_2^2$,
i.e.~exactly the objective value of the relaxation.

We follow a similar strategy for general $j$. We define a minimization
problem over ellipsoids centered at 0 that contain $v_1, \ldots,
v_n$. The objective of the problem is to minimize the volume of the
largest $j$-dimensional section of the containing ellipsoid. It is not
hard to show that this problem gives an upper bound on $j$-MSD
(Lemma~\ref{lm:jdim-ub}). The main technical challenge is to derive
the dual of this optimization problem and to analyze the natural
randomized rounding algorithm applied to it. An important difference
from the $j=d$ case is that the objective of the ellipsoid
optimization problem is no longer differentiable, which complicates
the analysis of the dual. When $1 < j < d$, the objective of the dual
``splits'' into two terms, one that resembles the $j=1$ case and
another that resembles the $j=d$ case. To relate the expected value of
the output of the rounding algorithm to this more complicated
objective we use the theory of Schur-concave functions applied to the
elementary symmetric polynomials.

We derandomize our algorithms using the method of conditional
expectations. This approach and the use of the elementary symmetric
polynomials to relate the eigenvalues of a matrix to its entries are
inspired by the volume-sampling algorithms of Deshpande and
Rademacher~\cite{DeshpandeR10}. These sampling techniques together
with the Schur concavity of \emph{ratios} of elementary symmetric
polynomials were used previously in the work of Guruswami and
Sinop~\cite{GuruswamiS12} on low rank matrix approximations.

\section{Preliminaries}

We use the notation $[n] = \{1, \ldots, n\}$ for an integer $n$. With
${S \choose k}$ we denote the set of size $k$ subsets of the set $S$.

We denote the reals by $\R$, the non-negative reals by $\R_+$, and the
positive reals by $\R_{++}$; analogously, $\Q$ are the rationals,
$\Q_+$ the non-negative rationals, and $\Q_{++}$ are the positive
rationals. We use $\langle\cdot, \cdot\rangle$ for the standard inner
product in $\R^d$. For a vector $x$, we denote by $x_{(i)}$ the $i$-th
largest coordinate of $x$. For an $m\times n$ matrix $M$, we use the
notation $M_{S,T}$ for the submatrix with rows indexed by the set $S
\subseteq[m]$ and columns indexed by the set
$T\subseteq[n]$. Sometimes we will allow $S$ and $T$ to be multisets,
in which case rows and columns are repeated as many times as the
multiplicity of the corresponding element. We use $M_S$ for the
submatrix $M_{[m], S}$, i.e. the submatrix with columns indexed by $S
\subseteq[n]$. For $x \in \R^d$, we use $\diag(x)$ to denote the
diagonal matrix with $x_1, \ldots, x_n$ on the main diagonal. When $x$
and $y$ are vectors, the relation $x \geq y$ means that $x_i \geq y_i$
for each index $i$. For a square symmetric matrix $M$, the notation $M
\succeq 0$ means that $M$ is positive semidefinite, and $M\succ 0$
means that $M$ is positive definite. We use $X \succeq Y$ (resp.~$X
\preceq Y$) as a shorthand for $X - Y \succeq 0$ (resp.~$Y-X \succeq
0$).

\subsection{From Simplices to Subdeterminants}

There is a well-known approximation preserving reduction from $j$-MVS
to $j$-MSD. For completeness, we give the reduction in the following
lemma. Let us use the notation $\mvs{j}{v_1, \ldots, v_n}$ for the
optimal value of the $j$-MVS problem on input $v_1, \ldots, v_n$, and
$\msd{j}{M}$ for the optimal value of the $j$-MSD problem on input $M$.

\begin{lemma}\label{lm:mvs2msd}
  There exists a deterministic polynomial time algorithm that maps an
  instance $v_1, \ldots, v_n \in \R^d$ of $j$-MVS to $n$ instances
  $M^1, \ldots, M^n$ of $j$-MSD, such that each $M^i$ is an $(n-1)\times
  (n-1)$ matrix of rank at most $d$, and $\mvs{j}{v_1, \ldots, v_n} =
  \frac{1}{j!}\max_{i = 1}^n{\sqrt{\msd{j}{M^i}}}$.
\end{lemma}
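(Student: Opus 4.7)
The plan is to reduce to the classical Gram determinant formula for simplex volume after observing that an optimal simplex can be assumed to have vertices among the input points. First I would argue that there exists an optimal $j$-simplex in $\conv\{v_1, \ldots, v_n\}$ all of whose $j+1$ vertices lie in $\{v_1, \ldots, v_n\}$. Fix all but one vertex, call the free vertex $w$, and let $L$ denote the affine span of the fixed vertices. Then the volume equals $\frac{1}{j}\cdot \mathrm{vol}_{j-1}(\text{base}) \cdot \mathrm{dist}(w, L)$. Since $\mathrm{dist}(\cdot, L)$ is the norm of an affine function, it is convex in $w$, so its maximum over the compact convex set $\conv\{v_1, \ldots, v_n\}$ is attained at an extreme point, which is necessarily some $v_i$. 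Iterating this vertex-by-vertex, I get an optimal simplex with vertex set $\{v_{i_0}, v_{i_1}, \ldots, v_{i_j}\}$ for some indices in $[n]$.

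Next I would set up the reduction. For each $i \in [n]$, let $U^i$ be the $d \times (n-1)$ matrix whose columns are the displacements $v_k - v_i$ for $k \in [n] \setminus \{i\}$ (with some fixed identification of $[n] \setminus \{i\}$ with $[n-1]$), and define $M^i \eqdef (U^i)^T U^i$. Each $M^i$ is symmetric positive semidefinite of rank at most $d$, and the construction is clearly deterministic polynomial time. For any size-$j$ subset $S \subseteq [n-1]$, the columns of $U^i_S$ are the displacements $v_k - v_i$ for $k \in S$, and the standard Gram determinant formula gives
\[
\mathrm{vol}_j(\text{simplex with vertices } v_i \text{ and } \{v_k : k \in S\}) \;=\; \frac{1}{j!}\sqrt{\det\bigl((U^i_S)^T U^i_S\bigr)} \;=\; \frac{1}{j!}\sqrt{\det\bigl((M^i)_{S,S}\bigr)}.
\]

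To conclude, I would combine these two steps. By the first step, the optimum of $j$-MVS is the maximum volume over all simplices whose $j+1$ vertices lie in $\{v_1, \ldots, v_n\}$. For each such simplex, I single out one vertex $v_i$ and let $S$ index the remaining $j$ vertices among $[n] \setminus \{i\}$. By the Gram determinant identity, the volume equals $\frac{1}{j!}\sqrt{\det((M^i)_{S,S})}$. Maximizing first over $S \in \binom{[n-1]}{j}$ yields $\frac{1}{j!}\sqrt{\msd{j}{M^i}}$, and then over $i \in [n]$ yields $\mvs{j}{v_1, \ldots, v_n} = \frac{1}{j!}\max_{i=1}^n \sqrt{\msd{j}{M^i}}$, as claimed.

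The only nontrivial piece is the convexity/extreme-point argument in the first paragraph; the rest is routine algebra. One could also avoid the extreme-point argument entirely by invoking the fact, folklore in this area, that the max volume $j$-simplex inside a polytope has vertices at vertices of the polytope, but including the short convexity proof keeps the lemma self-contained.
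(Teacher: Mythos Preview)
Your proposal is correct and follows essentially the same approach as the paper: define $M^i$ as the Gram matrix of the displacements $v_k - v_i$, invoke the Gram determinant formula for simplex volume, and use the fact that an optimal $j$-simplex can be taken with vertices among the $v_i$. The only difference is that the paper relegates the extreme-point argument to the phrase ``a standard argument,'' whereas you spell it out via the convexity of $\mathrm{dist}(\cdot,L)$.
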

\begin{proof}
  The algorithm outputs the $n$ matrices $M^1, \ldots M^n$, where
  $M^i$ is the Gram matrix of the vectors $v_1 - v_i, \ldots, v_{i-1}
  - v_i, v_{i+1} - v_i, \ldots, v_n$. I.e.~$M^i$ is a matrix whose
  rows and columns are indexed by the set $[n] \setminus\{i\}$ and
  whose entries are given by $m^i_{k,\ell} \eqdef \langle v_k - v_i,
  v_\ell - v_i \rangle$. It is clear from the construction that the
  matrices $M^i$ have rank at most $d$ and size $(n-1)\times (n-1)$. For any
  $i$ and any $S \subseteq [n]\setminus\{i\}$,
  $\sqrt{\det(M^i_{S,S})}$ is equal to the volume of $\conv\{v_k: k
  \in S \cup \{i\}\}$. Therefore, $\mvs{j}{v_1, \ldots, v_n} \geq
  \frac{1}{j!}\max_{i = 1}^n{\sqrt{\msd{j}{M^i}}}$. Moreover, a
  standard argument shows that there is a maximum volume simplex of dimension $j$ in
  the convex hull of $v_1, \ldots, v_n$  which is the
  convex hull of some subset $T$ of $j+1$ of the input vectors $v_1, \ldots,
  v_n$. Therefore,  for $i$ an arbitrary element of $T$ and $S \eqdef
  T \setminus\{i\}$, $\mvs{j}{v_1, \ldots, v_n} =
  \frac{1}{j!}\sqrt{\det(M^i_{S,S})}$, and this proves the lemma.
\end{proof}

Lemma~\ref{lm:mvs2msd} implies that a factor $\alpha$ approximation
algorithm for $j$-MSD implies a factor $\sqrt{\alpha}$ approximation
algorithm for $j$-MVS. For this reason, for the rest of the paper we
will focus our attention on the $j$-MSD problem.

\subsection{Convex Analysis and Optimization}

A \emph{subgradient} of a convex function $f: S \to \R$ at $x \in S$,
where $S$ is a convex open subset of $\R^d$, is a vector $y \in \R^d$
so that for every $z \in S$ we have
\[
f(z) \geq f(x) + \langle z-x, y\rangle.
\]
The set of subgradients of $f$ at $x$ is denoted $\partial{f(x)}$ and
is known as the \emph{subdifferential}. When $f$ is differentiable at
$x$, the subdifferential is a singleton set containing only the
gradient $\nabla f(x)$. If $f$ is defined by $f(x) = f_1(x) + f_2(x)$,
where $f_1, f_2: S \to \R$ , then $\partial f(x) = \partial f_1(x)
+ \partial f_2(x)$. A basic fact in convex analysis is that $f$
achieves its minimum at  $x$ if and only if $0 \in \partial
f(x)$. More information about subgradients and subdifferentials can be
found in~\cite{Rockafellar}.

Consider an optimization problem in the following general form:
\begin{align}
  &\text{Minimize } f_0(x)\label{eq:general-obj}\\
  &\text{s.t.}\notag\\
  &\forall 1\leq i \leq m: f_i(x) \leq 0.\label{eq:general-constr}
\end{align}
Here, $x \in \R^d$ and each $f_i$ is a function from a subset of
$\R^d$ to $\R$. When $f_0, \ldots, f_m$ are all convex functions over their respective
domains, we call the above program convex. A solution $x$ is \emph{feasible}
when it satisfies the constraints $f_i(x) \leq 0$. The \emph{optimal
  value} of the program is the infimum of $f_0(x)$ over feasible
$x$. A feasible solution $x$ is \emph{optimal} if $f_0(x) = v^*$, and
\emph{$\alpha$-optimal} (for $\alpha \geq 0$ a real number) if $f_0(x)
\leq  v^*+\alpha$, where $v^*$ is is the optimal value of the program.

The \emph{Lagrange dual function} associated with
\eqref{eq:general-obj}--\eqref{eq:general-constr} is defined as 
$g(y) = \inf_x f_0(x) + \sum_{i = 1}^m{y_if_i(x)}$,
where the infimum is over the intersection of the domains of
$f_1,\ldots,\ldots f_m$, and $y \in \R^m$, $y \geq 0$. Since $g(y)$
is the infimum of affine functions, it is a concave
upper-semicontinuous function. 

For any $x$ which is feasible for
\eqref{eq:general-obj}--\eqref{eq:general-constr}, and any $y \geq
0$, $g(y) \leq f_0(x)$. This fact is known as \emph{weak
  duality}. The \emph{Lagrange dual problem} is defined as
\begin{align}
  &\text{Maximize } g(y)
  \text{ s.t. }
  y \geq 0.\label{eq:L-dual}
\end{align}
\emph{Strong duality} holds when the optimal value of
\eqref{eq:L-dual} equals  the optimal
value of \eqref{eq:general-obj}--\eqref{eq:general-constr}. Slater's
condition is a commonly used sufficient condition for strong
duality. We state it next.

\begin{theorem}[Slater's Condition]\label{thm:slater}
  Assume $f_0, \ldots, f_m$ in the problem
  \eqref{eq:general-obj}--\eqref{eq:general-constr} are convex
  functions over their respective domains, and for some $k \geq 0$,
  $f_1, \ldots, f_k$ are affine functions. Let there be a point $x$ in
  the relative interior of the domains of $f_0, \ldots, f_m$, so that
  $f_i(x) \leq 0$ for $1 \leq i \leq k$ and $f_j(x) < 0$ for $k+1 \leq
  j \leq m$. Then the optimal value of
  \eqref{eq:general-obj}--\eqref{eq:general-constr} equals the optimal value
  of \eqref{eq:L-dual}, and the value of \eqref{eq:L-dual} is achieved
  if it is finite. 
\end{theorem}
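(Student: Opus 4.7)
The plan is the standard separating-hyperplane proof of strong duality. Weak duality is immediate (and already stated in the text), so the work is to show that the dual optimal value $v_d^*$ equals the primal optimal value $v^*$, and that $v_d^*$ is attained when finite. I will assume $v^* > -\infty$, since the other case is trivial from weak duality.

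First I would define the achievable set
\[
\mathcal{A} \;=\; \bigl\{(u, t) \in \R^m \times \R \;:\; \exists\, x \in \textstyle\bigcap_{i=0}^{m}\mathrm{dom}(f_i),\; f_i(x) \le u_i\ \forall i,\; f_0(x) \le t\bigr\}.
\]
Convexity of each $f_i$ makes $\mathcal{A}$ convex, and by definition $\mathcal{A}$ is disjoint from the convex set $\mathcal{B} = \{0\} \times (-\infty, v^*)$. By a standard hyperplane separation theorem there exist $(\lambda, \mu) \in \R^m \times \R$, not both zero, and $\alpha \in \R$ such that $\langle \lambda, u\rangle + \mu t \ge \alpha$ for all $(u,t) \in \mathcal{A}$ and $\mu t \le \alpha$ for $t < v^*$. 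Since $\mathcal{A}$ is closed under adding non-negative vectors to $u$ and $t$, one gets $\lambda \ge 0$ and $\mu \ge 0$, and taking $t \to v^{*-}$ yields $\alpha \ge \mu v^*$.

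The crux is ruling out $\mu = 0$. If $\mu > 0$, we normalize to $\mu = 1$ and take $y = \lambda$; plugging $u = f(x)$ and $t = f_0(x)$ into the separating inequality gives $f_0(x) + \langle y, f(x)\rangle \ge v^*$ for every $x$ in the common domain, so $g(y) \ge v^*$, and combined with weak duality we obtain $g(y) = v^*$, giving both strong duality and attainment. Suppose instead $\mu = 0$. Then $\langle \lambda, f(x) \rangle \ge 0$ for every $x$ in the common domain. Evaluating at the Slater point $\hat{x}$ and using $\lambda \ge 0$ with $f_j(\hat{x}) < 0$ for $k+1 \le j \le m$ forces $\lambda_j = 0$ for $j > k$. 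Thus the affine function $h(x) \eqdef \sum_{i=1}^{k} \lambda_i f_i(x)$ satisfies $h(x) \ge 0$ on the common domain while $h(\hat{x}) \le 0$, so $\hat{x}$ is a minimizer of $h$. Because $\hat{x}$ lies in the relative interior of the common domain, an affine function minimized at a relative interior point must be constant on the affine hull of that domain, so $h \equiv 0$ there. Using that the $f_i$ for $i \le k$ are affine and, without loss of generality, linearly independent on the affine hull of the domain (otherwise trim redundant constraints), this forces $\lambda = 0$, contradicting the non-triviality of the separator.

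The hardest part, as the preceding paragraph indicates, is the handling of the affine constraints: for non-affine constraints strict Slater feasibility is enough to rule out $\mu = 0$ immediately, but affine constraints only satisfy $f_i(\hat{x}) \le 0$, and what is needed is exactly that $\hat{x}$ is in the \emph{relative interior} of the domain, so that an affine function non-negative on the domain and non-positive at $\hat{x}$ must vanish on the affine hull. Once this is done the separator has $\mu > 0$, the renormalization in the previous paragraph produces a feasible dual solution $y \ge 0$ with $g(y) = v^*$, which simultaneously proves equality of the optimal values and exhibits an attaining dual solution.
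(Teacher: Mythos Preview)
The paper does not prove this theorem: it is stated as a classical background result and the reader is referred to \cite{BoydV-cvx} and \cite{Rockafellar}. So there is no paper proof to compare against; your proposal is supplying an argument where the paper supplies none.

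Your outline is the standard separating-hyperplane proof and is correct in its broad strokes. The one step I would flag is the sentence ``Using that the $f_i$ for $i \le k$ are affine and, without loss of generality, linearly independent on the affine hull of the domain (otherwise trim redundant constraints), this forces $\lambda = 0$.'' This is where the argument is thinnest. Linear dependence of the affine functions $f_1,\ldots,f_k$ on the affine hull of the domain does not mean any constraint is redundant for the feasibility set, so ``trimming'' is not obviously legitimate; and even if one could trim, the separator $(\lambda,\mu)$ was produced for the original $\mathcal{A}$, not the trimmed one. In the degenerate case $\mu=0$ with $h\equiv 0$ on the affine hull, the separating hyperplane theorem has simply handed you a useless separator, and nothing you have said rules that out. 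The clean fix is to invoke \emph{proper} separation (Rockafellar, Theorem~11.3): since the relative interiors of $\mathcal{A}$ and $\mathcal{B}$ are disjoint, there is a hyperplane that does not contain both sets. One then checks that any such properly separating hyperplane must have $\mu>0$, after which your normalization argument goes through unchanged.
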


For more information on convex programming and duality, we refer the
reader to the books by Boyd and Vandenberghe~\cite{BoydV-cvx} and
Rockafellar~\cite{Rockafellar}.

\subsection{Ellipsoids and John's Theorem}
\label{s:john}

An \emph{ellipsoid} is the image of the Euclidean ball $B_2^d \eqdef
\{x\in \R^d: \|x\|_2 \leq 1\}$ under an affine map. The ellipsoid $E = \{Ax +
b: x \in B_2^d\}$, where $A$ is a $d \times d$ matrix and $b \in
\R^d$, can be also written as $E = \{x: ((x-b)^Ty)^2 \leq y^TAA^Ty\
\forall y \in \R^d\}$, and when $A$ is invertible (i.e.~$E$ is
full-dimensional), this is equivalent to $E = \{x: (x-b)^T(AA^T)^{-1}
(x-b)\leq 1\}$.

The \emph{L\"{o}wner ellipsoid} of a set of points $v_1, \ldots, v_n \in
\R^d$ is the smallest volume ellipsoid $E$ such that $v_1 \ldots, v_n
\in E$. John~\cite{John48} proved that the L\"{o}wner ellipsoid of
$v_1, \ldots, v_n$ is $B_2^d$ if and only if there exist non-negative
reals $c_1, \ldots, c_n$ such that $\sum_i c_i v_i = 0$ and $\sum_i
c_i v_i v_i^T = I$. Below we state a variant of this theorem in which
we fix the center of the ellipsoid to be $0$.

Consider the following program, defined for $v_1, \ldots, v_n \in
\R^d$.
\begin{align}
  \text{Minimize\ \ } &-\ln \det(W) \text{ s.t.}\label{eq:john-obj}\\
  &v_i^TWv_i \leq 1 \ \ \forall 1 \leq i \leq n, \label{eq:john-contain}\\
  &W \succ 0. \label{eq:john-psd}
\end{align}
This program corresponds to finding the minimum volume ellipsoid
centered at 0 that contains $v_1, \ldots, v_n$. It is a convex
minimization problem over the open domain $\{W: W\succ 0\}$ with affine
constraints, and, therefore, satisfies Slater's condition. The
\emph{dual problem} to \eqref{eq:john-obj}--\eqref{eq:john-psd} is
\begin{align}
  \text{Maximize\ \ } &\ln \det\Bigl(\sum_{i = 1}^n{c_i v_iv_i^T}\Bigr)\label{eq:john-dual-obj}\\
  &\sum_{i = 1}^n{c_i} = d\label{eq:john-dual-sum}\\
  &c_i \geq 0 \ \ \forall 1 \leq i \leq n\label{eq:john-dual-pos}
\end{align}
Up to scaling of the variables $c_1, \ldots, c_n$, this is the
$D$-optimal design problem. For a proof of the duality, see
\cite[Sect. 5.1.6, 5.2.4, 7.5.2]{BoydV-cvx}; it also follows from the the more general
Theorem~\ref{thm:jdim-duality}.  Since it is the dual of a convex
minimization problem,
\eqref{eq:john-dual-obj}--\eqref{eq:john-dual-pos} is a convex
maximization problem. Then the following variant of John's theorem is
a direct consequence of strong duality for the program
\eqref{eq:john-obj}--\eqref{eq:john-psd} (which is implied by Slater's
condition):

\begin{lemma}\label{lm:john}
  The optimal value of \eqref{eq:john-obj}--\eqref{eq:john-psd} is
  equal to the optimal value of
  \eqref{eq:john-dual-obj}--\eqref{eq:john-dual-pos}. 
\end{lemma}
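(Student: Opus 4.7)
The plan is to apply Slater's condition (Theorem~\ref{thm:slater}) to the primal program \eqref{eq:john-obj}--\eqref{eq:john-psd} and then compute the Lagrangian dual explicitly, verifying it is equivalent to \eqref{eq:john-dual-obj}--\eqref{eq:john-dual-pos}. Without loss of generality I will assume $v_1, \ldots, v_n$ span $\R^d$ (otherwise the primal is unbounded below, and the claimed duality degenerates; the spanning assumption is also harmless because any instance can be reduced to this case by restricting attention to the linear span of the $v_i$'s).

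First I would verify that \eqref{eq:john-obj}--\eqref{eq:john-psd} is a convex program: the map $W \mapsto -\ln\det W$ is convex on the open cone $\{W : W \succ 0\}$, and each constraint $v_i^T W v_i \leq 1$ is affine in $W$. Since all the non-trivial constraints are affine, Slater's condition requires only a single feasible point in the relative interior of the domain; the matrix $W = \epsilon I$ with $\epsilon < (\max_i \|v_i\|_2^2)^{-1}$ works. Thus strong duality holds, and the dual optimum is attained.

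Next I would form the Lagrangian $L(W,c) = -\ln\det W + \sum_{i=1}^n c_i (v_i^T W v_i - 1)$ with multipliers $c \geq 0$ and minimize over $W \succ 0$. Setting $\nabla_W L = -W^{-1} + \sum_i c_i v_i v_i^T = 0$ gives the stationary point $W^* = (\sum_i c_i v_i v_i^T)^{-1}$ whenever $\sum_i c_i v_i v_i^T \succ 0$ (and otherwise the infimum of $L$ is $-\infty$, so such $c$ are never dual-optimal). Substituting back, using $\operatorname{tr}(W^* \sum_i c_i v_i v_i^T) = d$, the Lagrange dual function reduces to
\[
g(c) = \ln\det\Bigl(\sum_{i=1}^n c_i v_i v_i^T\Bigr) + d - \sum_{i=1}^n c_i.
\]

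The remaining step is to show that maximizing $g(c)$ over $c \geq 0$ is equivalent to the program \eqref{eq:john-dual-obj}--\eqref{eq:john-dual-pos}. The key observation is that $g$ is homogeneous enough under scaling to pin down the total mass: for $\lambda > 0$,
\[
g(\lambda c) = d\ln\lambda + \ln\det\Bigl(\sum_i c_i v_i v_i^T\Bigr) + d - \lambda \sum_i c_i,
\]
which is maximized over $\lambda$ at $\lambda^* = d/\sum_i c_i$. Hence any dual maximizer can be rescaled to satisfy $\sum_i c_i = d$ without decreasing the objective, and on this affine slice the additive term $d - \sum_i c_i$ vanishes, leaving exactly the objective \eqref{eq:john-dual-obj} with the constraints \eqref{eq:john-dual-sum}--\eqref{eq:john-dual-pos}. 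Combined with strong duality, this proves the lemma. The only mildly subtle point is this normalization argument showing that the constraint $\sum_i c_i = d$ emerges for free from maximizing the unconstrained Lagrangian dual; everything else is a direct computation.
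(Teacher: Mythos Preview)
Your proposal is correct and follows the standard route the paper points to (it cites Boyd--Vandenberghe and the more general Theorem~\ref{thm:jdim-duality} rather than spelling out the argument). In particular, your scaling computation to force $\sum_i c_i = d$ is exactly the $j=d$ specialization of the $h(t)=g(tc)$ argument the paper carries out in the proof of Theorem~\ref{thm:jdim-duality}.
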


\subsection{Properties of Determinants}

First we recall the classical Binet-Cauchy formula for the determinant
of a matrix product. For any  $m\times n$ matrix $A$, $m \geq n$, we  have
\begin{equation}
  \label{eq:binetcauchy}
  \det(A^TA) = \sum_{S \in {[n] \choose m}}{\det(A_S)^2}.
\end{equation}

Let $e_k$ be the degree $k$ elementary symmetric polynomial, i.e.
\[
e_k(x_1, \ldots, x_n) \eqdef\sum_{S \in {[n]\choose k}}{\prod_{i \in S}{x_i}}.
\]
Let $M$ be an $n\times n$ symmetric matrix with eigenvalues
$\lambda_1, \ldots, \lambda_n$. It is well-known that $\det(M) =
e_n(\lambda_1, \ldots, \lambda_n)$ and $\tr(M) = e_1(\lambda_1,
\ldots, \lambda_n)$. In fact a similar identity involving the entries
of $M$ and its eigenvalues holds for all $k$:
\begin{equation}
  \label{eq:sym-dets}
  \sum_{S \in {[n]\choose k}}\det(M_{S,S}) = e_k(\lambda_1, \ldots, \lambda_n).
\end{equation}
This fact is also classical and can be proved by expressing 
each coefficient of the characteristic polynomial of $M$ in two different
ways: as a sum of subdeterminants, and as a symmetric polynomial of its
roots.

\subsection{Schur Convexity}

For a vector $x \in \R_+^n$, recall that $x_{(i)}$
means the $i$-th largest entry of $x$. A vector $y\in \R_+^n$
\emph{majorizes} the vector $x\in \R_+^n$, written $x \prec y$, if the
following inequalities are satisfied:
\begin{align*}
  \sum_{i = 1}^k{x_{(i)}} &\leq \sum_{i = 1}^k{y_{(i)}}\ \ \ \forall 1 \leq i
  \leq n-1\\
  \sum_{i = 1}^n{x_{(i)}} &= \sum_{i = 1}^n{y_{(i)}}.
\end{align*}
A function $f:\R_+^n \to \R$ is \emph{Schur-convex} if $x \prec y
\implies f(x) \leq
f(y)$; if $-f$ is Schur-convex, we say that $f$ is
\emph{Schur-concave}. 

We use the following classical fact about the Schur-concavity of
elementary symmetric functions, proved by Schur.
\begin{lemma}[\cite{Schur23}]\label{lm:sym-schur}
  The elementary symmetric polynomial $e_k$ of degree $k$, $1 \leq k\leq n$,
  is Schur-concave. 
\end{lemma}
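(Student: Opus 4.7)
The plan is to prove Schur-concavity through the standard reduction to $T$-transforms (equivalently, via the Schur-Ostrowski criterion). Recall that $x \prec y$ if and only if $x$ can be obtained from $y$ by a finite sequence of ``Robin-Hood'' transfers: replacing two coordinates $y_i > y_j$ by $y_i - t$ and $y_j + t$ for some $0 \leq t \leq (y_i - y_j)/2$. Since $e_k$ is a symmetric polynomial, it suffices to show that a single such transfer cannot decrease $e_k$, i.e.\ that for any two indices $i \neq j$, fixing $x_\ell$ for $\ell \notin \{i, j\}$ and fixing the sum $s := x_i + x_j$, the value $e_k(x)$ is a non-increasing function of $|x_i - x_j|$.

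The key computation is a decomposition of $e_k(x)$ according to how many of $x_i, x_j$ appear in each monomial. Writing $x^{(i,j)}$ for the vector $x$ with coordinates $i$ and $j$ removed, grouping monomials by whether they contain $x_i$ only, $x_j$ only, both, or neither yields
\[
e_k(x) = e_k(x^{(i,j)}) + (x_i + x_j)\, e_{k-1}(x^{(i,j)}) + x_i x_j\, e_{k-2}(x^{(i,j)}),
\]
with the convention $e_{-1} = 0$ and $e_0 = 1$. Using $s = x_i + x_j$ and the identity $x_i x_j = (s^2 - (x_i - x_j)^2)/4$, this becomes
\[
e_k(x) = e_k(x^{(i,j)}) + s\, e_{k-1}(x^{(i,j)}) + \frac{s^2 - (x_i - x_j)^2}{4}\, e_{k-2}(x^{(i,j)}).
\]
Since $x \in \R_+^n$, we have $e_{k-2}(x^{(i,j)}) \geq 0$, so the right-hand side is non-increasing in $|x_i - x_j|$ with $s$ held fixed. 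Therefore a single Robin-Hood transfer cannot decrease $e_k$, which establishes Schur-concavity.

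There is essentially no obstacle: the only delicate step is being careful with the boundary cases $k = 1$ and $k = n$ (where one of $e_{k-1}(x^{(i,j)})$ or $e_{k-2}(x^{(i,j)})$ involves the zero polynomial or vanishes by convention), but these are handled uniformly by the conventions $e_{-1} = 0$ and $e_0 = 1$. One can alternatively phrase the same argument via the Schur-Ostrowski criterion by differentiating: since $\partial e_k/\partial x_i = e_{k-1}(x^{(i)})$ and $e_{k-1}(x^{(i)}) - e_{k-1}(x^{(j)}) = (x_j - x_i)\, e_{k-2}(x^{(i,j)})$, the product $(x_i - x_j)\bigl(\partial_i e_k - \partial_j e_k\bigr) = -(x_i - x_j)^2\, e_{k-2}(x^{(i,j)})$ is manifestly non-positive on $\R_+^n$, which is the differential form of Schur-concavity.
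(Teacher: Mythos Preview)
Your proof is correct and is essentially the classical argument. The decomposition
\[
e_k(x) = e_k(x^{(i,j)}) + (x_i + x_j)\, e_{k-1}(x^{(i,j)}) + x_i x_j\, e_{k-2}(x^{(i,j)})
\]
is right, and together with $e_{k-2}(x^{(i,j)}) \geq 0$ on $\R_+^n$ it immediately gives monotonicity in $|x_i - x_j|$ for fixed sum; the Schur--Ostrowski variant you sketch at the end is the same computation in differential form and is equally valid.

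As for comparison with the paper: the paper does not prove this lemma at all. It is stated as a classical result with a citation to Schur and used as a black box in the analysis of the rounding algorithm. So your write-up is strictly more than what the paper provides, and there is nothing to compare against beyond noting that your argument is the standard textbook proof (via either $T$-transforms or the Schur--Ostrowski criterion) that one would find in, e.g., Marshall--Olkin--Arnold.
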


\section{The Full-Dimensional Case}
\label{sec:fulldim}

In this section we discuss the special case $j = d$. We treat this case
separately because it is a natural problem in itself, and the
technical details of our algorithm are simpler, while illustrating
some of the key ideas of our approach. 

We first prove a simple lemma which is essential to our analysis. 

\begin{lemma}\label{lm:sample-fulldim}
  Let $V$ be a $d\times n$ matrix with column vectors $v_1, \ldots,
  v_n$. Let $p_1, \ldots, p_n$ give the probabilities for a
  distribution on $[n]$, i.e.~$p_i \geq 0$ for all $i$ and $\sum p_i =
  1$. Let $S$ be a random multiset of $d$ elements, each sampled
  independently with replacement from $[n]$ according to the
  distribution determined by $p_1, \ldots, p_n$. Then
  \[
  \E \det(V_S)^2 = d! \det\Bigl(\sum_{i = 1}^n{p_i v_i v_i^T}\Bigr).
  \]
\end{lemma}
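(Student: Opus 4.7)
The plan is to expand the expectation directly using the law of total probability over the ordered sample, notice the vanishing of terms with repeated indices, and then recognize the surviving sum as the Binet-Cauchy expansion of the right-hand side.

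First I would treat $S$ as an ordered tuple $(i_1, \ldots, i_d)$ drawn independently from $p$; since $\det(V_S)$ only cares about the multiset of columns, this does not change the value of $\det(V_S)^2$. Expanding by linearity of expectation gives
\[
\E \det(V_S)^2 = \sum_{(i_1,\ldots,i_d) \in [n]^d} p_{i_1}\cdots p_{i_d}\, \det(v_{i_1},\ldots,v_{i_d})^2.
\]
The key observation is that whenever two indices coincide, the matrix $(v_{i_1},\ldots,v_{i_d})$ has a repeated column and its determinant vanishes, so only tuples of distinct indices contribute.

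Next I would group the surviving tuples by the underlying set $T \in \binom{[n]}{d}$ they enumerate. Each such $T$ appears in exactly $d!$ orderings, and in each ordering the squared determinant equals $\det(V_T)^2$ (sign flips are killed by squaring). This collapses the sum to
\[
\E \det(V_S)^2 = d! \sum_{T \in \binom{[n]}{d}} \Bigl(\prod_{i \in T} p_i\Bigr) \det(V_T)^2.
\]

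Finally, to identify this with $d!\det(\sum_i p_i v_iv_i^T)$ I would introduce the matrix $W \eqdef V\diag(\sqrt{p_1},\ldots,\sqrt{p_n})$, so that $WW^T = \sum_{i=1}^n p_i v_i v_i^T$ and $\det(W_T)^2 = (\prod_{i \in T} p_i)\det(V_T)^2$ for every $T$. Applying the Binet-Cauchy formula \eqref{eq:binetcauchy} to $W^T$ (a $d \times n$ matrix with $d \leq n$, the relevant case being when the right-hand side is non-zero) gives
\[
\det\Bigl(\sum_{i=1}^n p_i v_iv_i^T\Bigr) = \det(WW^T) = \sum_{T \in \binom{[n]}{d}} \det(W_T)^2,
\]
which matches the sum above up to the factor of $d!$. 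I do not expect a real obstacle here: the entire argument is bookkeeping plus one invocation of Binet-Cauchy; the only place requiring a mild comment is verifying that the $n < d$ case and the degenerate case where $\sum_i p_i v_i v_i^T$ is singular are both automatically handled (both sides vanish identically).
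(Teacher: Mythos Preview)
Your proof is correct and follows essentially the same approach as the paper: expand the expectation, drop terms with repeated indices, collect the $d!$ orderings of each set, absorb the weights into the matrix $W = V\diag(\sqrt{p_i})$ (the paper writes $VP^{1/2}$), and invoke Binet--Cauchy. The only difference is cosmetic---you are a bit more explicit about the ordered-tuple bookkeeping and about the degenerate cases, but the argument is the same.
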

\begin{proof}
  Let us express the expectation $\E \det(V_S)^2$ explicitly. If
  any element in $S$ repeats, then $\det(V_S)^2 = 0$. Any other choice
  of $S$ can be sampled in $d!$ ways, each with probability $\prod_{i
    \in S}{p_i}$. Therefore, the expectation is
  \begin{equation*}
  \E \det(V_S)^2 = \sum_{S \in {[n] \choose d}}{d!\prod_{i \in
      S}{p_i}\det(V_S)^2} 
  = d!\sum_{S \in {[n] \choose  d}}{\det((VP^{1/2})_S)^2},
  \end{equation*}
  where $P = \diag(p_1, \ldots, p_n)$ is a diagonal matrix with the
  values $p_i$ on the main diagonal.
  The right hand side is equal to $d!
  \det(VPV^T)$ by the Binet-Cauchy
  formula~\eqref{eq:binetcauchy}. Since $VPV^T = \sum_{i  = 1}^n{p_i
    v_i v_i^T}$, this finishes the proof. 
\end{proof}

We present our approximation algorithm for $d$-MSD as
Algorithm~\ref{alg:fulldim}. The  main approximation guarantee of
the algorithm is given in Theorem~\ref{thm:fulldim-main}.

\begin{algorithm}[t]
  \caption{Randomized Sampling for $d$-MSD} \label{alg:fulldim}
  \begin{algorithmic}
    \REQUIRE Positive semidefinite $n\times n$ matrix $M$ of rank $d$.

    \STATE Compute a Cholesky factorization $M = V^T V$ of $M$, $V \in
    \R^{d \times n}$. Let $v_1, \ldots, v_n\in \R^d$ be the columns of $V$;

    \STATE Compute an $\alpha$-optimal solution $c_1, \ldots, c_n$ of
    \eqref{eq:john-dual-obj}--\eqref{eq:john-dual-pos} for $v_1,
    \ldots, v_n$;

    \STATE $S \eqdef \emptyset$;

    \FOR{$k = 1, \ldots, d$}
      \STATE Sample $i$ from $[n]$ according to the probability
      distribution given by $\Pr[i = \ell] = \frac{1}{d}c_\ell$;
      \STATE Add $i$ to the multiset $S$;
    \ENDFOR

    \ENSURE $S$. 
  \end{algorithmic}
\end{algorithm}

\begin{theorem}\label{thm:fulldim-main}
  Let the random multiset $S$ be the output of
  Algorithm~\ref{alg:fulldim} for input $M$ and an $\alpha$-optimal
  $c_1, \ldots, c_n$. Then
  \[
  \E \det(M_{S,S}) \geq \frac{d!}{d^d}e^{-\alpha}\ \msd{d}{M} \sim
  \sqrt{2\pi d} e^{-d -\alpha}\ \msd{d}{M}. 
  \]
\end{theorem}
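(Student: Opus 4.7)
The plan is to combine Lemma~\ref{lm:sample-fulldim} (which computes the expected squared subdeterminant in terms of the dual objective) with the observation that the dual program \eqref{eq:john-dual-obj}--\eqref{eq:john-dual-pos} is in fact a \emph{relaxation} of $d$-MSD, so near-optimality of $c_1,\ldots,c_n$ transfers to a bound involving $\msd{d}{M}$.

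First I would observe that for any multiset $S$ of size $d$, if $S$ has a repeated element then $\det(V_S) = 0$, and otherwise $\det(M_{S,S}) = \det(V_S^T V_S) = \det(V_S)^2$. Applying Lemma~\ref{lm:sample-fulldim} to the probability vector $p_i = c_i/d$ therefore gives
\[
\E \det(M_{S,S}) \;=\; \E \det(V_S)^2 \;=\; d!\,\det\Bigl(\sum_{i=1}^n \tfrac{c_i}{d}\,v_i v_i^T\Bigr) \;=\; \frac{d!}{d^d}\,\det\Bigl(\sum_{i=1}^n c_i v_i v_i^T\Bigr).
\]

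Next I would show that the dual program \eqref{eq:john-dual-obj}--\eqref{eq:john-dual-pos} is a relaxation of $d$-MSD: for any $T \in \binom{[n]}{d}$, setting $c_i = 1$ for $i\in T$ and $c_i = 0$ otherwise yields a feasible solution (the sum is $d$), with objective $\ln \det(\sum_{i\in T} v_i v_i^T) = \ln \det(V_T V_T^T) = \ln \det(V_T)^2 = \ln \det(M_{T,T})$. Hence the optimal value $v^*$ of the dual satisfies $v^* \geq \ln \msd{d}{M}$, and so $\det(\sum c_i v_i v_i^T) \geq e^{v^* - \alpha} \geq e^{-\alpha}\,\msd{d}{M}$ by $\alpha$-optimality of $c$.

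Combining the two displays gives $\E \det(M_{S,S}) \geq \tfrac{d!}{d^d}\,e^{-\alpha}\,\msd{d}{M}$, and Stirling's approximation $d! \sim \sqrt{2\pi d}\,(d/e)^d$ yields the asymptotic form. No step here is really an obstacle; the only thing to be slightly careful about is handling multisets with repetitions (which contribute $0$), so that the computation in Lemma~\ref{lm:sample-fulldim} applies cleanly.
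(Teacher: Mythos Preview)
Your proof is correct and follows essentially the same approach as the paper: apply Lemma~\ref{lm:sample-fulldim} with $p_i=c_i/d$, then observe that the indicator vector of an optimal set $T$ is feasible for \eqref{eq:john-dual-obj}--\eqref{eq:john-dual-pos}, making the program a relaxation of $d$-MSD. The paper also offers a second, more indirect argument via Lemma~\ref{lm:john} and a Hadamard-type inequality (included because it generalizes to the $j<d$ case), but your argument coincides with the paper's first, simpler one.
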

\begin{proof}
  Observe first that $\det(M_{S,S}) = \det(V_S^TV_S) =
  \det(V_S)^2$ for any $S$ of size $d$. Then by Lemma~\ref{lm:sample-fulldim}, with $p_i
  \eqdef \frac{1}{d}c_i$, we have 
  \[
  \E \det(M_{S,S}) = \E \det(V_S)^2 
  =  d! \det\Bigl(\sum_i p_i v_i v_i^T\Bigr)
  = \frac{d!}{d^d}\det\Bigl(\sum_i c_i v_i v_i^T\Bigr).
  \]
  It remains to show that $\det\Bigl(\sum_i c_i v_i v_i^T\Bigr) \geq
  e^{-\alpha}\msd{d}{M}$. We give two arguments: one is simpler, and
  the other one will be the one which we will generalize for the
  $j$-MSD problem.

  For the first argument, let $T$ be a set that achieves $\msd{d}{M}$
  and let $a\in \R^n$ be its indicator vector, i.e.~$a_i \eqdef 1$ if
  $i \in T$, and $a_i \eqdef 0$ otherwise. Then, since $T$ is of size
  $d$, $\sum_i a_i = d$, so $a$ is a feasible solution to
  \eqref{eq:john-dual-obj}--\eqref{eq:john-dual-pos}. Because $c$ is
  an $\alpha$-optimal solution, we have that
  \[
  \det\Bigl(\sum_i c_i v_i v_i^T\Bigr) \geq e^{-\alpha}
  \det\Bigl(\sum_i a_i v_i v_i^T\Bigr) =  e^{-\alpha}\det(M_{T,T}) =  e^{-\alpha}\msd{d}{M}.
  \]

  For the second, more indirect argument, we will use
  Lemma~\ref{lm:john}. Let $W$ be an optimal solution to
  \eqref{eq:john-obj}--\eqref{eq:john-psd}; the matrix $W$ is
  invertible by constraint \eqref{eq:john-psd}. By
  Lemma~\ref{lm:john}, $\det\Bigl(\sum_i c_i v_i v_i^T\Bigr) \geq
  e^{-\alpha} \det(W^{-1})$. It remains to show that $\det(W^{-1})
  \geq \msd{d}{M}$. Let $T \in {[n]\choose d}$ be such that
  $\det(M_{T,T}) =\det(V_T)^2= \msd{d}{M}$. We have the following
  variant of Hadamard's inequality:
  \begin{align*}
  \det(V_T)^2 = \det(V_T^TWV_T)\det(W^{-1}) &\leq
  \Bigl(\frac{1}{d}\tr(V_T^TWV_T)\Bigr)^d\det(W^{-1}) \\
  &= \Bigl(\frac1d \sum_{i \in  T}{v_i^TWv_i}\Bigr)^d \det(W^{-1}) \leq \det(W^{-1}).
  \end{align*}
  The first inequality above follows by applying the AM-GM inequality
  to the eigenvalues of $V_T^TWV_T$, and the last inequality is
  implied by the constraints \eqref{eq:john-contain}. What we have
  shown is equivalent to the intuitive geometric fact that the volume
  of the largest simplex with one vertex at $0$ contained in the
  convex hull of $v_1, \ldots, v_n$ is at most the volume of the
  largest simplex with one vertex at $0$  contained in the L\"owner
  ellipsoid of $v_1, \ldots, v_n$ (or in fact any ellipsoid containing
  these points). 

  Putting everything together, we have
  \[
  \E\det(M_{S,S}) = \frac{d!}{d^d}\det\Bigl(\sum_i c_i v_i v_i^T\Bigr)
 \geq e^{-\alpha}\frac{d!}{d^d}\msd{d}{M},
  \]
  as desired. The asymptotic estimate $\frac{d!}{d^d} \sim \sqrt{2\pi
    d} e^{-d}$ is a restatement of Stirling's approximation of $d!$. 
\end{proof}

Since \eqref{eq:john-dual-obj}--\eqref{eq:john-dual-pos} is a convex
optimization problem, we can use the ellipsoid method to to compute an
$\alpha$-optimal solution in time polynomial in $n, d, \log
\alpha^{-1}$~\cite{GLS-ellipsoid}. Khachiyan~\cite{Khachiyan96-John}
showed how to compute a $d\ln(1+\epsilon)$-optimal solution to
\eqref{eq:john-dual-obj}--\eqref{eq:john-dual-pos} (i.e. a
multiplicative $(1+\epsilon)^d$ approximation to $\det\Bigl(\sum_i c_i
v_i v_i^T\Bigr)$) using a polynomial in $n,d, \epsilon^{-1}$ number of
real value operations.
Using either method with Algorithm~\ref{alg:fulldim}, we get an
approximation factor of $\frac{1}{\sqrt{2\pi d}}
((1+\epsilon)e)^{d}$ in time polynomial in $n$, $d$, and
$\epsilon^{-1}$. 

In Section~\ref{sec:derand} we show how to derandomize Algorithm~\ref{alg:fulldim}
using the method of conditional expectations.

\section{The General Case}

A natural first attempt to extend Algorithm~\ref{alg:fulldim} to
general $j < d$ is to simply sample $j$, rather than $d$, coordinates
from the distribution induced by an optimal solution to
\eqref{eq:john-dual-obj}--\eqref{eq:john-dual-pos}. A straightforward
extension of the analysis in Section~\ref{sec:fulldim} shows that this algorithm achieves
 approximation factor $\tfrac{d^d}{j!}$, which is $\exp(O(j))$ for $j
 = \Omega(d)$ but approaches $d^d$ for smaller $j$. In order to
 achieve $\exp(O(j))$ approximation for all $j$, we generalize the
 L\"owner ellipsoid problem. The rounding algorithm remains
 essentially the same, but the details of the analysis become more
 complicated. 

\subsection{$j$-L\"owner Ellipsoids}

A key technical tool for our algorithm for the $j$-MSD problem is a
generalization of the L\"owner ellipsoid.  For a set of points $v_1,
\ldots, v_n \in \R^d$ and a positive integer $j \leq d$, we define a
$j$-L\"owner ellipsoid as an ellipsoid $E$ that contains $v_1, \ldots,
v_n$ and minimizes the quantity $\max_H {\vol_j(H \cap E)}$, where $H$
ranges over $j$-dimensional affine subspaces of $\R^d$. When $j = d$,
this is just the standard L\"owner ellipsoid; when $j = 1$, this is
the minimum radius Euclidean ball that contains the points (or any
ellipsoid contained in it that also contains the points).  As we did
with the classical L\"owner ellipsoid, in the sequel we will fix our
ellipsoids to be centered at $0$, as this is what we need for our
application.

It is not hard to see that $\max_H{\vol_j(H \cap E)}$ for an ellipsoid
$E$ is proportional to the product of the lengths of the $j$ longest
major axes of $E$. We use this observation to formulate the problem of
finding $j$-L\"owner ellipsoid as a convex program. First we need to
define the appropriate function on the space of positive definite
matrices.

\begin{definition}
  For a vector $x \in \R_{++}^d$, we define $\delta_j(x) \eqdef - \sum_{i =
    d-j+1}^d{\ln x_{(i)}}$, where $x_{(i)}$ is the $i$-th largest
  coordinate of $x$. For a $d\times d$ matrix $W \succ 0$ with
  eigenvalue vector $\lambda = (\lambda_1, \ldots , \lambda_d)$, we
  define $\Delta_j(W) \eqdef \delta_j(\lambda)$. 
\end{definition}

To show that $\Delta_j(W)$ is convex and continuous, and to
characterize its subdifferentials, we will use a general result of
Lewis, extending classical work by von Neumann on unitarily invariant
matrix norms. Below we state a slightly specialized case of his result.

\begin{lemma}[\cite{Lewis95}]\label{lm:unit-invariant}
  For a $d \times d$ matrix $W \succeq 0$, let $\lambda(W)$ be the
  vector of eigenvalues of $W$. For a function $f: \R_{++}^d \to \R$
  which is symmetric with respect to permutations of its arguments,
  define a function $F$ on the set of $d\times d$ positive definite
  matrices by $F(W) \eqdef f(\lambda(W))$. If $f$ is convex, and
  continuous, then so is $F$. Moreover, the subdifferentials of $F$
  are given by
  \[
  \partial F(W) = \{U \diag(\mu) U^T:  \mu \in \partial f(\lambda(X)),
  U \text{ orthonormal}, U\diag(\lambda(W))U^T = W\}.
  \]
\end{lemma}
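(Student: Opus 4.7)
The plan is to prove the result via Fenchel conjugate duality, using the Ky Fan / von Neumann trace inequality as the bridge between spectral functions on matrices and symmetric functions on vectors. Recall that this inequality states $\tr(YW) \leq \langle \lambda(Y), \lambda(W) \rangle$ for symmetric $Y, W$ with both eigenvalue vectors sorted in the same order, with equality iff $Y$ and $W$ admit a common orthonormal eigenbasis realizing that ordering. This equality condition is what will ultimately produce the structural description of the subdifferential.

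Continuity of $F$ would follow from standard matrix perturbation theory: the sorted eigenvalue vector $\lambda(W)$ depends continuously on $W$ (e.g., via Weyl's inequalities), so composing with the continuous symmetric function $f$ yields a continuous $F$. For convexity, I would use $f = f^{**}$ (valid since $f$ is convex and continuous) together with the Fan inequality. Symmetry of $f$ gives symmetry of $f^*$, and for any vector $y$ one has
\begin{equation*}
  \max_{U \text{ orthogonal}} \tr(U \diag(y) U^T W) = \langle \tilde{y}, \lambda(W) \rangle,
\end{equation*}
where $\tilde y$ is $y$ sorted to match $\lambda(W)$. Plugging this into $F(W) = \sup_y \{\langle y, \lambda(W)\rangle - f^*(y)\}$ and reindexing by $Y = U \diag(y) U^T$ yields
\begin{equation*}
  F(W) = \sup_{Y \text{ symmetric}} \{\tr(YW) - f^*(\lambda(Y))\},
\end{equation*}
expressing $F$ as a pointwise supremum of affine functions of $W$, hence convex. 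The same manipulation (swapping the roles of $W$ and $Y$) gives $F^*(Y) = f^*(\lambda(Y))$.

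For the subdifferential I would use the standard criterion $Y \in \partial F(W)$ iff $F(W) + F^*(Y) = \tr(YW)$, which by the above becomes $f(\lambda(W)) + f^*(\lambda(Y)) = \tr(YW)$. Chaining Fan's inequality $\tr(YW) \leq \langle \lambda(Y), \lambda(W) \rangle$ with Fenchel--Young $\langle \lambda(Y), \lambda(W) \rangle \leq f(\lambda(W)) + f^*(\lambda(Y))$ forces equality in both. Fan equality forces a common ordered orthonormal diagonalization $W = U \diag(\lambda(W)) U^T$ and $Y = U \diag(\lambda(Y)) U^T$ for some orthogonal $U$; Fenchel--Young equality gives $\lambda(Y) \in \partial f(\lambda(W))$. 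Setting $\mu = \lambda(Y)$ yields the claimed inclusion, and the reverse inclusion is immediate by retracing the equalities.

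The main obstacle I anticipate is carefully handling the degenerate case when $W$ has repeated eigenvalues, where the diagonalizing $U$ is not unique: one is free to rotate within each eigenspace of $W$. I would need to verify that the set on the right-hand side of the subdifferential formula is consistently defined in this case. This reduces to the observation that symmetry and convexity of $f$ force $\partial f(\lambda(W))$ to be invariant under permutations preserving blocks of equal eigenvalues, so that for any $\mu \in \partial f(\lambda(W))$ the matrix $U \diag(\mu) U^T$ is unchanged when $U$ is replaced by any other orthogonal matrix diagonalizing $W$ in the same sorted order. Modulo this bookkeeping, the Fan/Fenchel argument above is the whole proof.
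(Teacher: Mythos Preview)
The paper does not prove this lemma; it is quoted without proof from Lewis's paper as a known tool. So there is no ``paper's own proof'' to compare against. That said, your Fan/von Neumann trace inequality plus Fenchel conjugacy argument is exactly the route Lewis takes in the cited source, so your plan is the standard one and is correct in outline.

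One small correction to your final paragraph: the claim that ``for any $\mu \in \partial f(\lambda(W))$ the matrix $U\diag(\mu)U^T$ is unchanged when $U$ is replaced by any other orthogonal matrix diagonalizing $W$'' is false. Take $W=I$ and $f(x)=\max_i x_i$; then $\mu=(1,0,\ldots,0)\in\partial f(\lambda(W))$, and $U\diag(\mu)U^T$ ranges over all rank-one orthogonal projections as $U$ varies. This is not a problem for the lemma: the right-hand side is a \emph{set} that explicitly ranges over all such $U$, so it is well-defined regardless. What symmetry of $f$ actually buys you is permutation-equivariance of $\partial f$, which makes the set independent of the \emph{ordering convention} chosen for $\lambda(W)$ (if $\lambda$ is replaced by $P\lambda$ for a permutation matrix $P$, then $\partial f(P\lambda)=P\partial f(\lambda)$ and the diagonalizing $U$'s become $UP^T$, so $UP^T\diag(P\mu)PU^T=U\diag(\mu)U^T$). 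With that adjusted, the bookkeeping goes through and your argument is complete.
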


For a set $S \subseteq [d]$, let us use the notation $1_{S}$ for the
$d$-dimensional indicator vector of $S$, i.e.~the $i$-th coordinate of
$1_{S}$ is $1$ if $i \in S$ and $0$ otherwise. Let us define the
convex polytope $V_{j,d} \eqdef \conv\{1_S: S \in {[d] \choose
  j}\}$. This is the basis polytope of the rank $j$ uniform matroid.
We can now prove the convexity of $\Delta_j$ and characterize its
subdifferentials.
\begin{lemma}\label{lm:subdiffs}
  The function $\Delta_j$ is convex and continuous on the
  space of positive definite matrices. Moreover, for any $W \succ 0$
  with eigenvalues 
  \begin{equation}\label{eq:coordinates}
  0 < \lambda_1 \leq \ldots \leq \lambda_k < \lambda_{k+1} = \ldots =
  \lambda_j = \ldots = \lambda_\ell < \lambda_{\ell+1} \leq \ldots
  \leq \lambda_d,
  \end{equation}
  the subdifferential of $\Delta_j$ at $W$ is
  \begin{align*}
  \partial \Delta_j(W) = \{U\diag(\mu) U^T:\ &U \text{ orthonormal},\ U\diag(\lambda)
  U^T = W\\
  &\mu_i = -\lambda_i^{-1}\ \ \forall 1 \leq i \leq k,\\
  &(\mu_{k+1}, \ldots, \mu_\ell) \in -\lambda_j^{-1}V_{j-k, \ell -
    k}\\
  &\mu_{\ell+1} = \ldots = \mu_d = 0\}.
  \end{align*}
\end{lemma}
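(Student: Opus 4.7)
The plan is to reduce the claim to a computation on eigenvalue vectors via Lewis's Lemma~\ref{lm:unit-invariant}, which requires (a) exhibiting $\delta_j$ as a convex and continuous symmetric function on $\R_{++}^d$, and (b) describing its subdifferential at the ordered eigenvalue vector in \eqref{eq:coordinates}. The key observation that makes both tasks easy is the pointwise-max representation
\[
\delta_j(x) = \max_{S \in \binom{[d]}{j}} f_S(x), \qquad f_S(x) \eqdef -\sum_{i \in S} \ln x_i,
\]
which holds because $-\ln$ is monotone decreasing, so choosing $S$ to be the indices of the $j$ smallest coordinates of $x$ recovers exactly $-\sum_{i = d-j+1}^d \ln x_{(i)}$.

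From this representation, each $f_S$ is convex and continuous on $\R_{++}^d$, hence so is $\delta_j$ as a finite pointwise maximum; symmetry under coordinate permutations is obvious. Lemma~\ref{lm:unit-invariant} then immediately yields convexity and continuity of $\Delta_j$ on $\{W : W \succ 0\}$ and reduces the subdifferential computation to showing that $\partial \delta_j(\lambda)$ has the form prescribed inside the braces of the lemma's conclusion (once conjugated by an orthonormal diagonalizer of $W$).

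To compute $\partial \delta_j$ at the point $\lambda$ in \eqref{eq:coordinates}, I would use the standard formula for the subdifferential of a finite maximum of smooth convex functions,
\[
\partial \delta_j(\lambda) = \conv\bigl\{\nabla f_S(\lambda) : S \in \tbinom{[d]}{j},\ f_S(\lambda) = \delta_j(\lambda)\bigr\}.
\]
The active sets $S$ are exactly those that contain every index whose $\lambda$-value is strictly less than $\lambda_j$, together with enough indices from the tied block $\{k+1,\ldots,\ell\}$ to reach size $j$; that is, $S = \{1,\ldots,k\} \cup T$ with $T \in \binom{\{k+1,\ldots,\ell\}}{j-k}$. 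Since $(\nabla f_S(\lambda))_i = -\lambda_i^{-1}\ind[i \in S]$ and $\lambda_{k+1} = \cdots = \lambda_\ell = \lambda_j$, the convex hull decouples by coordinate blocks: the first $k$ coordinates are forced to $-\lambda_i^{-1}$, the last $d - \ell$ coordinates are forced to $0$, and the middle block $\{k+1,\ldots,\ell\}$ traces out $-\lambda_j^{-1}$ times the convex hull of the indicator vectors $\ind_T$ for $T \in \binom{\{k+1,\ldots,\ell\}}{j-k}$, which is by definition $-\lambda_j^{-1} V_{j-k,\ell-k}$. Substituting into Lewis's formula produces the stated description of $\partial \Delta_j(W)$.

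The only technical subtlety is step (b): on the tied block, one must verify both that all the candidate $S$'s truly are active and that no other $S$ is active, which is where the strict inequalities in \eqref{eq:coordinates} are used. Everything else is routine bookkeeping, including checking that Lewis's lemma can be applied with any fixed ordering of $\lambda(W)$ because $\delta_j$ is symmetric.
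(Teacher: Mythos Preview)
Your proposal is correct and follows essentially the same argument as the paper: the pointwise-max representation $\delta_j(x)=\max_{S}\{-\sum_{i\in S}\ln x_i\}$, Lewis's transfer principle (Lemma~\ref{lm:unit-invariant}), and the convex-hull formula for the subdifferential of a finite maximum of smooth functions, with the active sets identified exactly as you describe. The paper's proof is the same computation written out slightly more tersely.
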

{\begin{proof}
  Because $\delta_j$ is symmetric,  Lemma~\ref{lm:unit-invariant}
  implies that in order to show that $\Delta_j$ is convex and
  continuous, we only need to show that $\delta_j$ is
  convex and continuous. Because the function $-\ln x$ is
  monotone decreasing in $x$, we can write $\delta_j(x)$ as
  \[
  \delta_j(x) = \max_{S \in {[d]\choose j}}{-\sum_{i \in S}{\ln x_i}}.
  \]
  For each $S$, the function $\delta_S(x) \eqdef -\sum_{i \in
    S}{\ln x_i}$ is continuous and convex over $\R_{++}^d$. Then the
  claim follows because the
  pointwise maximum of a finite number of continuous convex functions is
  continuous and convex.

  By Lemma~\ref{lm:unit-invariant}, to prove the characterization of
  the subdifferentials of $\Delta_j$, it is enough to show that for
  $\lambda$, $k$, and $\ell$ satisfying \eqref{eq:coordinates}, we have
  \begin{align*}
  \partial\delta_j(\lambda) = \{\mu:   &\mu_i = -\lambda_i^{-1}\ \ \forall 1 \leq i \leq k,\\
  &(\mu_{k+1}, \ldots, \mu_\ell) \in -\lambda_j^{-1}V_{j-k, \ell -
    k}\\
  &\mu_{\ell+1} = \ldots = \mu_d = 0\}.
  \end{align*}
  Since $\delta_j(\lambda) = \max\{\delta_S(\lambda): S \in {[d]\choose j}\}$, and
  each $\delta_S$ is differentiable, we have 
  \[
  \partial \delta_j(\lambda) = \conv\{\nabla \delta_S(\lambda): S \in \arg \max_{S
    \in {[d]\choose j}}\delta_S(\lambda)\}.
  \]
  Because $-\ln x$ is monotone decreasing in $x$, we have that 
  $S$ achieves $\max\{\delta_S(\lambda): S \in {[d]\choose j}\}$ if and only
  if $\{1, \ldots, k\} \subseteq S$ and $|S  \cap \{k+1, \ldots,
  \ell\}| = j-k$. Therefore, 
    \[
  \partial \delta_j(\lambda) =  \conv\{\nabla \delta_S(\lambda):
  \{1, \ldots, k\} \subseteq S, |S  \cap \{k+1, \ldots, \ell\}| =
  j-k\}
  \]
  The gradient $\nabla \delta_S(\lambda)$ is given by
  $\frac{\partial\delta_S(\lambda)}{\partial \lambda_i} =
  -\lambda_i^{-1}$ for $i \in S$ and
  $\frac{\partial\delta_S(\lambda)}{\partial \lambda_i} = 0$ otherwise.
  We have
  \[
  \partial \delta_j(\lambda) =  \conv\{(-\lambda_i^{-1}1_{i \in S})_{i
    = 1}^d:
  \{1, \ldots, k\} \subseteq S, |S  \cap \{k+1, \ldots, \ell\}| =
  j-k\}.
  \]
  This implies the desired characterization of $\partial \delta_j(\lambda)$.
\end{proof}}

We capture a $j$-L\"owner ellipsoid of the points $v_1, \ldots, v_n
\in \R^d$ as an optimal solution of the following program.
\begin{align}
  \text{Minimize\ \ } &\Delta_j(W) \text{\ \ s.t.}  \label{eq:jdim-obj}\\
  &v_i^TWv_i \leq 1 \ \ \forall 1 \leq i \leq n, \label{eq:jdim-contain}\\
  &W \succ 0. \label{eq:jdim-psd}    
\end{align}

By Lemma~\ref{lm:subdiffs}, \eqref{eq:jdim-obj}--\eqref{eq:jdim-psd}
is a convex optimization problem over the domain $W\succ
0$. Moreover, it satisfies Slater's condition, as the constraints
are affine. The next lemma shows that the program can be used to
give an upper bound on $\msd{j}{M}$.
\begin{lemma}\label{lm:jdim-ub}
  Let $M = V^TV$ be an $n\times n$ positive semidefinite matrix of
  rank $d$, and let the columns of $V$ be $v_1, \ldots, v_n\in \R^d$. Then
  $\msd{j}{M} \leq e^\mu$ for $\mu$ equal to the optimal value of
  \eqref{eq:jdim-obj}--\eqref{eq:jdim-psd}.
\end{lemma}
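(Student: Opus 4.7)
My plan is to extend the Hadamard-type inequality from the second argument of the proof of Theorem~\ref{thm:fulldim-main} from the case $j=d$ to general $j \leq d$. Let $W \succ 0$ be an optimal solution of \eqref{eq:jdim-obj}--\eqref{eq:jdim-psd} with objective value $\mu = \Delta_j(W)$, and let $S \in {[n] \choose j}$ achieve $\msd{j}{M} = \det(V_S^T V_S)$. It suffices to show $\det(V_S^T V_S) \leq e^\mu$.

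First I would diagonalize $W = U\diag(\lambda) U^T$ with $U$ orthonormal and $\lambda = (\lambda_1, \ldots, \lambda_d)$ the eigenvalues of $W$. Writing $V_S^T V_S = V_S^T W^{1/2} W^{-1} W^{1/2} V_S$ and setting $C \eqdef U^T W^{1/2} V_S$, this becomes $\det(C^T \diag(\lambda)^{-1} C)$. Applying Binet-Cauchy (equation \eqref{eq:binetcauchy}) to the $d \times j$ matrix $\diag(\lambda)^{-1/2} C$ expresses the right-hand side as a weighted sum over $j$-element subsets of $[d]$:
$$\det(V_S^T V_S) = \sum_{T \in {[d] \choose j}} \Bigl(\prod_{i \in T} \lambda_i^{-1}\Bigr) \det(C_{T, [j]})^2.$$

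Next I would bound each weight $\prod_{i \in T} \lambda_i^{-1}$ by its maximum over $T$; this maximum is attained when $T$ selects the indices of the $j$ smallest eigenvalues of $W$, and by the very definition of $\delta_j$ and $\Delta_j$ it equals $e^{\delta_j(\lambda)} = e^\mu$. Pulling this uniform bound out of the sum and collapsing the remainder by Binet-Cauchy in the reverse direction gives
$$\det(V_S^T V_S) \leq e^\mu \sum_T \det(C_{T, [j]})^2 = e^\mu \det(C^T C) = e^\mu \det(V_S^T W V_S),$$
where I have used $C^T C = V_S^T W^{1/2} U U^T W^{1/2} V_S = V_S^T W V_S$. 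Finally I would apply AM-GM to the eigenvalues of the positive semidefinite matrix $V_S^T W V_S$, exactly as in the $j=d$ proof:
$$\det(V_S^T W V_S) \leq \Bigl(\frac{1}{j}\tr(V_S^T W V_S)\Bigr)^j = \Bigl(\frac{1}{j}\sum_{i \in S} v_i^T W v_i\Bigr)^j \leq 1,$$
where the last inequality uses constraint \eqref{eq:jdim-contain}. Chaining the bounds yields $\msd{j}{M} \leq e^\mu$.

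The conceptually key step is the inequality $\prod_{i \in T} \lambda_i^{-1} \leq e^\mu$ for every $T \in {[d] \choose j}$; this is precisely what motivated the definition of $\Delta_j$, so the objective of \eqref{eq:jdim-obj} is tailored to control exactly the subset sums that appear from the Binet-Cauchy expansion. The only point at which one might worry about looseness is replacing all the weights by their maximum, but since the remaining sum reassembles into $\det(V_S^T W V_S)$, which is tamed by the enclosure constraint, no real loss occurs. Everything else is a direct transcription of the $j=d$ argument from Theorem~\ref{thm:fulldim-main}; the Binet-Cauchy bookkeeping is what lets the proof pass from the full-dimensional determinant to the product of the $j$ smallest eigenvalues.
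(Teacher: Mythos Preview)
Your proof is correct, and it reaches the same intermediate inequality as the paper, namely $\det(V_S^T V_S) \leq e^\mu \det(V_S^T W V_S)$, after which both arguments finish identically with AM--GM and constraint~\eqref{eq:jdim-contain}. The difference lies in how that intermediate inequality is obtained. The paper introduces the orthogonal projection $\Pi = UU^T$ onto $\vspan\{v_i : i \in S\}$ (with $U$ a $d\times j$ orthonormal matrix), writes $\det(V_S^T V_S) = \det(V_S^T W V_S)\,\det(U^T W U)^{-1}$, and then invokes the Cauchy interlace theorem to conclude that the $j$ eigenvalues of $U^T W U$ dominate the $j$ smallest eigenvalues of $W$, giving $\det(U^T W U)^{-1} \leq e^{\Delta_j(W)}$. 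Your route instead diagonalizes $W$, expands $\det(V_S^T V_S)$ via Binet--Cauchy into a sum over $T \in {[d]\choose j}$ weighted by $\prod_{i\in T}\lambda_i^{-1}$, bounds every weight by the maximal one $e^{\delta_j(\lambda)} = e^\mu$, and recollapses the sum by Binet--Cauchy into $\det(V_S^T W V_S)$. Your argument is arguably more elementary (it avoids interlacing) and makes very explicit why $\Delta_j$ is the ``right'' objective: it is exactly the log of the largest weight in the Binet--Cauchy expansion. The paper's argument, on the other hand, carries a cleaner geometric picture (restricting $W$ to the $j$-dimensional subspace spanned by the chosen vectors) and yields an \emph{equality} $\det(V_S^T V_S) = \det(V_S^T W V_S)\,\det(U^T W U)^{-1}$ before any estimation, which may be useful if one later wants to track where slack enters.
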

\begin{proof}
  Geometrically, the lemma captures the following fact. Let $E$ be an
  ellipsoid centered at $0$ and containing $v_1, \ldots, v_n$. Then
  the volume of the largest $j$-dimensional simplex in the convex hull
  of $v_1, \ldots, v_n$ with one vertex at $0$ is at most the volume
  of the largest $j$-dimensional simplex in $E$ with one vertex at
  $0$. Moreover, the latter simplex is contained in the
  $j$-dimensional subspace whose intersection with $E$ has the largest
  volume. In the formal proof below we use a linear algebraic
  argument.

  Let $W$ be an optimal solution to
  \eqref{eq:jdim-obj}--\eqref{eq:jdim-psd} for $v_1, \ldots, v_n$, and
  let $S$ be such that $\det(M_{S,S}) = \det(V_S^TV_S) =
  \msd{j}{M}$. Let $\Pi = UU^T$ be the orthogonal projection matrix
  onto $\vspan\{v_i: i \in S\}$, where $U$ is a $d\times j$
  orthonormal matrix, i.e.~$U^TU = I$. Since $\Pi$ acts as the identity on
  $\vspan\{v_i: i \in S\}$, we have $\Pi V_S = V_S$, and, therefore,
  \begin{align*}
    \msd{j}{M} = \det(V_S^TV_S) &= \det(V_S^T \Pi V_S) = \det((V_S^TU)(U^TV_S))\\
    &= \det((V_S^TU)(U^TWU)(U^TV_S))\det(U^TWU)^{-1} \\
    &= \det(V_S^T\Pi   W\Pi V_S) \det(U^TWU)^{-1}\\
    &= \det(V_S^TWV_S) \det(U^TWU)^{-1}.
  \end{align*}
  Let $\lambda_1 \leq \ldots \leq \lambda_d$ be the eigenvalues of
  $W$, and let $\mu_1 \leq \ldots \leq \mu_j$ be the eigenvalues of
  $U^TWU$. By the Cauchy interlace theorem, $\lambda_i \leq \mu_i$ for
  any $1\leq i \leq j$, and, therefore,
  \[
  \det(U^TWU)^{-1} = \prod_{i = 1}^j{\mu_i^{-1}} \leq \prod_{i =
    1}^j{\lambda_i^{-1}} = e^{\Delta_j(W)} = e^\mu.
  \]
  On the other hand, by applying the AM-GM inequality to the
  eigenvalues of the matrix $V_S^TW V_S$, we get
  \[
  \det(V_S^T W V_S) \leq \Bigl(\frac1j \tr(V_S^T W V_S)\Bigr)^{j}
  = \Bigl(\frac1j \sum_{i \in S}{v_i^TWv_i}\Bigr)^{j} \leq 1.
  \]
  The final inequality above follows from the constraints
  \eqref{eq:jdim-contain}. Combining the inequalities gives the
  desired bound.
\end{proof}

\subsection{Duality for $j$-L\"owner Ellipsoids}
  
As mentioned above, the program
\eqref{eq:jdim-obj}--\eqref{eq:jdim-psd} that we used to
capture $j$-L\"owner ellipsoids is convex and satisfies Slater's
condition. Therefore, it admits a dual characterization, which we will
use in our algorithm. In this section we derive the dual
characterization using the Lagrange dual function.

Before we introduce the dual, or even properly define its objective
function, we need to prove a technical lemma.
\begin{lemma}\label{lm:thresh}
  Let $x_1 \geq \ldots \geq x_m \geq 0$ be non-negative reals,
  and let $j \leq m$ be a positive integer. There exists a unique integer
  $k$, $0 \leq k \leq j-1$, such that
  \begin{equation}\label{eq:thresh}
  x_k > \frac{\sum_{i > k}{x_i}}{j - k} \geq x_{k+1},
  \end{equation}
  with the convention $x_0 = \infty$. 
\end{lemma}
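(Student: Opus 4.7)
The plan is to reparametrize the condition by introducing $a_k \eqdef (j-k)^{-1}\sum_{i > k} x_i$ for $0 \le k \le j - 1$, so that \eqref{eq:thresh} becomes simply $x_k > a_k \ge x_{k+1}$. All of the structural work is then done by the elementary recursion $(j-k)\,a_k = x_{k+1} + (j-k-1)\,a_{k+1}$, valid for $0 \le k \le j-2$, which one may rearrange into the twin identities
\[
a_k - x_{k+1} = \frac{j-k-1}{j-k}\,(a_{k+1} - x_{k+1}) \qquad \text{and} \qquad a_{k+1} - a_k = \frac{a_k - x_{k+1}}{j-k-1}.
\]

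For \emph{existence}, I would take $k^*$ to be the smallest index in $\{0,1,\ldots,j-1\}$ with $a_{k^*} \ge x_{k^* + 1}$; this set is nonempty because $a_{j-1} = \sum_{i = j}^m x_i \ge x_j$. If $k^* = 0$, the left inequality $x_0 = \infty > a_0$ is automatic. If $k^* \ge 1$, minimality forces $a_{k^* - 1} < x_{k^*}$, and substituting $k = k^* - 1$ into the first identity shows that $a_{k^* - 1} < x_{k^*}$ is equivalent to $a_{k^*} < x_{k^*}$, which is the strict left inequality at $k^*$.

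For \emph{uniqueness}, I would argue by contradiction: suppose indices $k < k'$ in $\{0,1,\ldots,j-1\}$ both satisfy \eqref{eq:thresh}. From $a_k \ge x_{k+1}$, the second identity gives $a_{k+1} \ge a_k$, and combining with the monotonicity $x_{k+1} \ge x_{k+2}$ yields $a_{k+1} \ge x_{k+2}$. Iterating propagates the inequalities $a_i \ge x_{i+1}$ (and hence $a_{i+1} \ge a_i$) for every $k \le i \le j-2$, so in particular $a_{k'} \ge a_{k'-1} \ge x_{k'}$, contradicting the strict inequality $x_{k'} > a_{k'}$ from \eqref{eq:thresh} at $k'$.

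I do not anticipate any serious obstacle; once the correct reformulation is in place the recursion does all of the bookkeeping, and the argument is largely a matter of reading off the equivalences. A small amount of care is needed only for the boundary cases $k^* = 0$ (handled by the convention $x_0 = \infty$) and $j = 1$ (where only $k = 0$ is possible and the recursion is never invoked).
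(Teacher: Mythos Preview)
Your proof is correct and follows essentially the same approach as the paper's: both establish existence by taking the first $k$ at which the right-hand inequality $a_k \ge x_{k+1}$ holds and deducing the strict left-hand inequality from the failure at $k-1$, and both prove uniqueness by propagating the right-hand inequality forward to show the strict left-hand inequality cannot hold at any larger index. Your explicit recursion and the two derived identities package the bookkeeping a bit more cleanly than the paper's direct manipulations of the partial sums $x_{>k}$, but the underlying argument is the same.
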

\begin{proof}
  Define $x_{>k} \eqdef \sum_{i > k}{x_i}$.  If $x_{> 0}\geq jx_1$
  holds, then \eqref{eq:thresh} is satisfied for $k = 0$, and we are
  done. So let us assume that $x_{>0} < jx_1$. Then $x_{>1} = x_{>0} -
  x_1 < (j-1)x_1$, and the first inequality in \eqref{eq:thresh} is
  satisfied for $k = 1$. If the second inequality is also satisfied we
  are done, so let us assume that $x_{>1} < (j-1)x_2$, which implies
  the first inequality in \eqref{eq:thresh} for $k = 2$. Continuing in
  this manner, we see that if the inequalities \eqref{eq:thresh} are
  not satisfied for any $k \in \{0, \ldots, j - 2\}$, then we must
  have $x_{>j-1} < x_{j-1}$. But the second inequality for $k = j-1$,
  i.e.~$x_{>j-1} = x_j + x_{>j}\geq x_j$ is always satisfied because all the $x_i$
  are non-negative, so we have that if \eqref{eq:thresh} does not hold
  for any $k \leq j-2$, then it must hold for $k = j-1$. This finishes the
  proof of existence.

  To prove uniqueness, assume $k$ is the smallest integer such that
  \eqref{eq:thresh} holds, and let $\ell > k$ be arbitrary. We will
  prove that the strict inequality in \eqref{eq:thresh} cannot hold for
  $\ell$, i.e.~$(j-\ell)x_{\ell} \leq x_{>\ell}$. Indeed, because $x_{>k}
  \geq (j-k)x_{k+1}$ by the choice of $k$, and because $x_{k+1} \geq
  \ldots \geq x_{\ell}$ by assumption, we have
  \begin{align*}
  (j-\ell)x_{\ell } \leq (j-\ell)x_{k+1} &= (j-k)x_{k+1} - (\ell - k)x_{k+1}\\
  &\leq \sum_{i > k} x_i - (\ell - k)x_{k+1}\\
  &= \sum_{i = k+1}^\ell{(x_i - x_{k+1})} + \sum_{i > \ell}{x_i} \leq
  \sum_{i > \ell}{x_i}. 
  \end{align*}
  This completes the proof of uniqueness.
\end{proof}

We now introduce a function which will be used in formulating a dual
characterization
of~\eqref{eq:jdim-obj}--\eqref{eq:jdim-psd}.
\begin{definition}
  For $x \in \R_+^d$, we define $\gamma_j(x) \eqdef \sum_{i =
    1}^k{\ln x_{(i)}} + (j-k)\ln\Bigl(\frac{1}{j-k}\sum_{i =
    k+1}^d{x_{(i)}}\Bigr)$, where $k$ is the unique integer such
  that $ x_{(k)} > \frac{\sum_{i > k}{x_{(i)}}}{j - k} \geq x_{(k+1)}$. For a
  $d\times d$
  matrix $X \succeq 0$ with eigenvalue vector $\lambda$, we define
  $\Gamma_j(X) \eqdef \gamma_j(\lambda)$.
\end{definition}

We will prove that the dual
of~\eqref{eq:jdim-obj}--\eqref{eq:jdim-psd} is equivalent to the
following optimization problem:
\begin{align}
  \text{Maximize\ \ } &\Gamma_j\Bigl(\sum_{i = 1}^n{c_i v_i
    v_i^T}\Bigr)\label{eq:jdim-dual-obj}\\
  &\sum_{i = 1}^n{c_i} = j\label{eq:jdim-dual-sum}\\
  &c_i \geq 0 \ \ \forall 1 \leq i \leq n\label{eq:jdim-dual-pos}
\end{align}

\begin{theorem}\label{thm:jdim-duality}
  The program \eqref{eq:jdim-dual-obj}--\eqref{eq:jdim-dual-pos} is a
  convex optimization problem, and its optimal value is equal to the
  optimal value of~\eqref{eq:jdim-obj}--\eqref{eq:jdim-psd}.
\end{theorem}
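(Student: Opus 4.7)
The plan is to obtain \eqref{eq:jdim-dual-obj}--\eqref{eq:jdim-dual-pos} as a rescaling of the Lagrange dual of \eqref{eq:jdim-obj}--\eqref{eq:jdim-psd}. The primal is a convex optimization problem over the open cone $\{W : W \succ 0\}$ with affine inequality constraints, and $W = \varepsilon I$ is strictly feasible for $\varepsilon > 0$ sufficiently small, so Slater's condition (Theorem~\ref{thm:slater}) applies and strong duality holds. Writing $A(c) \eqdef \sum_{i=1}^n c_i v_iv_i^T$, the Lagrangian is $L(W,c) = \Delta_j(W) + \tr(W A(c)) - \sum_i c_i$, so the dual function equals $g(c) = \phi(A(c)) - \sum_i c_i$, where $\phi(A) \eqdef \inf_{W \succ 0}[\Delta_j(W) + \tr(WA)]$. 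The core technical step is to show $\phi(A) = \Gamma_j(A) + j$ whenever $A \succeq 0$ has rank at least $j$.

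To evaluate $\phi(A)$, I first reduce to a diagonal problem. Let $\alpha_1^\downarrow \geq \cdots \geq \alpha_d^\downarrow \geq 0$ be the eigenvalues of $A$. By von Neumann's trace inequality, for any $W \succ 0$ with eigenvalues $\lambda_1 \leq \cdots \leq \lambda_d$ one has $\tr(WA) \geq \sum_i \lambda_i \alpha_i^\downarrow$, with equality when the $\lambda_i$-eigenvector of $W$ is the $\alpha_i^\downarrow$-eigenvector of $A$. Since $\Delta_j(W)$ depends only on the eigenvalues of $W$, the infimum reduces to minimizing $-\sum_{i=1}^j \ln \lambda_i + \sum_{i=1}^d \alpha_i^\downarrow \lambda_i$ over $0 < \lambda_1 \leq \cdots \leq \lambda_d$. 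To solve this, apply Lemma~\ref{lm:thresh} to $(\alpha_1^\downarrow, \ldots, \alpha_d^\downarrow)$ to obtain the unique threshold $k \in \{0, \ldots, j-1\}$, and set $\lambda_i = 1/\alpha_i^\downarrow$ for $1 \leq i \leq k$ and $\lambda_i = (j-k)/\sum_{\ell > k} \alpha_\ell^\downarrow$ for $i > k$. The defining inequalities of Lemma~\ref{lm:thresh} are precisely what is needed for the ordering $\lambda_1 \leq \cdots \leq \lambda_d$ to hold, and the subdifferential formula of Lemma~\ref{lm:subdiffs} shows that $-A \in \partial \Delta_j(W)$ at this $W$. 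Since $\Delta_j(W) + \tr(WA)$ is convex in $W$, this first-order condition certifies global optimality. A direct calculation then yields $\Delta_j(W) + \tr(WA) = \sum_{i=1}^k \ln \alpha_i^\downarrow + (j-k)\ln\bigl(\tfrac{1}{j-k}\sum_{i>k}\alpha_i^\downarrow\bigr) + j = \Gamma_j(A) + j$.

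With $g(c) = \Gamma_j(A(c)) + j - \sum_i c_i$ in hand, I reconcile the Lagrangian dual with \eqref{eq:jdim-dual-obj}--\eqref{eq:jdim-dual-pos} by a scaling argument. From the definition of $\gamma_j$ one reads off the homogeneity $\Gamma_j(tA) = \Gamma_j(A) + j \ln t$ for $t > 0$. For any $c \geq 0$ with $\sum_i c_i > 0$, the map $t \mapsto g(tc) = \Gamma_j(A(c)) + j\ln t + j - t\sum_i c_i$ is maximized at $t = j/\sum_i c_i$, with value $\Gamma_j(A(c'))$ for $c' = tc$ satisfying $\sum_i c'_i = j$. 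Hence $\sup_{c \geq 0} g(c) = \sup\{\Gamma_j(A(c')) : c' \geq 0,\, \sum_i c'_i = j\}$, and by strong duality this equals the primal optimum. Convexity of the program follows because $g$ is a pointwise infimum of functions affine in $c$, hence concave, so $\Gamma_j(A(c)) = g(c) + \sum_i c_i - j$ is concave in $c$ as well, making the constrained maximization a convex problem.

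The main obstacle will be the infimum computation $\phi(A) = \Gamma_j(A) + j$: the subdifferential of $\Delta_j(W)$ in Lemma~\ref{lm:subdiffs} has a delicate case structure depending on the multiplicities of $W$'s eigenvalues, and constructing a $W$ with $-A \in \partial \Delta_j(W)$ requires choosing exactly which of $A$'s eigenvalues correspond to the ``inverse'' block $\mu_i = -\lambda_i^{-1}$ and which are absorbed into the averaged block $-\lambda_j^{-1}V_{j-k,\ell-k}$. Lemma~\ref{lm:thresh} is engineered precisely to make this identification canonical, so the remaining work is mostly bookkeeping once the threshold $k$ is identified.
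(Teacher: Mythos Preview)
Your proposal is correct and follows essentially the same route as the paper: both set up the Lagrange dual, compute the infimum over $W$ by constructing an optimizer via the threshold index of Lemma~\ref{lm:thresh} and verifying optimality through the subdifferential description in Lemma~\ref{lm:subdiffs}, and then use the homogeneity $\Gamma_j(tA) = \Gamma_j(A) + j\ln t$ to pass from the unconstrained dual to the normalized form $\sum_i c_i = j$. The paper packages the infimum step as a separate lemma (Lemma~\ref{lm:subgr-soln}) and handles the rank-deficient case $\mathrm{rank}\,A(c) < j$ explicitly (showing $g(c) = -\infty$ there), whereas you reduce to the diagonal problem via von Neumann's trace inequality and take all $\lambda_i$ for $i>k$ equal---which in fact avoids the $\epsilon$-perturbation the paper uses; you should add a sentence disposing of the low-rank case, but otherwise the arguments coincide.
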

Observe that when $j= d$, $\Gamma_j(X) = \ln\det(X)$, so that
\eqref{eq:jdim-obj}--\eqref{eq:jdim-psd} in this case reduces to
\eqref{eq:john-dual-obj}--\eqref{eq:john-dual-pos}. I.e.~Theorem~\ref{thm:jdim-duality} generalizes  Lemma~\ref{lm:john}.

To prove Theorem~\ref{thm:jdim-duality}, we need two additional technical
lemmas. The first one is well-know and follows from more general results
characterizing the facets of the basis polytope of a
matroid~\cite{schrijver-combop-B}. 
\begin{lemma}\label{lm:unij-poly}
  For any $j$ and $d$,
  $V_{j,d} = \{x: \sum_{i = 1}^d{x_i}  = j, 0\leq x_i \leq 1~\forall
  1\leq i\leq d\}.$ 
\end{lemma}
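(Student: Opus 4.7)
The plan is to prove the two inclusions separately, using the fact that $V_{j,d}$ is defined as the convex hull of the vertices $\{1_S : S \in \binom{[d]}{j}\}$, while the right-hand side is an $\mathcal{H}$-description of a polytope.

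For the easy containment $V_{j,d} \subseteq \{x : \sum_i x_i = j,\ 0 \leq x_i \leq 1\}$, I would just observe that each extreme generator $1_S$ with $|S| = j$ satisfies both the equality $\sum_i (1_S)_i = j$ and the box constraints $0 \leq (1_S)_i \leq 1$. Since the set defined by these (in)equalities is convex, it contains the convex hull of the $1_S$'s.

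For the reverse containment, let $P \eqdef \{x \in \R^d : \sum_i x_i = j,\ 0 \leq x_i \leq 1\}$. Since $P$ is a bounded polyhedron, $P$ equals the convex hull of its vertices, so it suffices to prove that every vertex of $P$ is of the form $1_S$ with $|S| = j$. A vertex $x$ of $P$ is characterized by having $d$ linearly independent active constraints among $\{\sum_i x_i = j\} \cup \{x_i \geq 0\}_{i=1}^d \cup \{x_i \leq 1\}_{i=1}^d$. The hyperplane $\sum_i x_i = j$ is always active, contributing one; the remaining $d-1$ linearly independent active constraints must each be of the form $x_i = 0$ or $x_i = 1$, hence at least $d-1$ coordinates of $x$ lie in $\{0,1\}$. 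Combined with $\sum_i x_i = j \in \Z$, the last coordinate is then forced to lie in $\{0,1\}$ as well, so $x \in \{0,1\}^d$ with $\sum_i x_i = j$; that is, $x = 1_S$ for some $S \in \binom{[d]}{j}$.

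This argument is mostly bookkeeping, so I do not anticipate a serious obstacle. The only place where one needs to be a little careful is making sure the linear independence of the $d-1$ box constraints actually forces that many coordinates to be fixed, and that the remaining coordinate ends up in $\{0,1\}$ rather than somewhere in the open interval $(0,1)$; both follow immediately from integrality of $j$ and the fact that at least $d-1$ distinct coordinates are being pinned at values in $\{0,1\}$.
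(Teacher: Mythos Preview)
Your argument is correct. The easy containment is immediate, and for the reverse containment your vertex-counting argument is sound: a vertex of $P$ in $\R^d$ requires $d$ linearly independent active constraints; the equality $\sum_i x_i = j$ contributes one, and the remaining $d-1$ must come from box constraints on $d-1$ distinct coordinates (since $x_i=0$ and $x_i=1$ cannot both be tight, and the normals $\pm e_i$ for distinct $i$ are linearly independent together with the all-ones vector). Integrality of $j$ then forces the last coordinate into $\{0,1\}$ as you say.

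The paper does not actually prove this lemma; it simply notes that the statement is well-known and follows from the general characterization of the basis polytope of a matroid, citing Schrijver's book. Your proof is a direct, self-contained alternative that avoids invoking matroid theory: you exploit the specific structure of the uniform matroid, where the box constraints are simple enough that a vertex analysis by hand is immediate. The matroid route is more general (it gives the facet description for any matroid basis polytope via the rank function), but for this particular lemma your elementary argument is shorter and requires no external machinery.
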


The next lemma is the key technical ingredient in the proof of
Theorem~\ref{thm:jdim-duality}. 

\begin{lemma}\label{lm:subgr-soln}
  Let $X\succeq 0$ be a $d\times d$ matrix of rank at least $j$. Then
  there exists a \junk{polynomial-time computable} $d\times d$ matrix $W
  \succ 0 $ such that $X \in -\partial \Delta_j(W)$, and $\Gamma_j(X)
  = \Delta_j(W)$.
\end{lemma}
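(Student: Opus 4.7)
The plan is to explicitly construct $W$ from the spectral data of $X$, using Lemma~\ref{lm:thresh} to identify the right ``threshold index'' $k$, and then verify the two required properties using Lemma~\ref{lm:subdiffs} and direct computation.

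First, I would diagonalize $X = U\diag(x_1,\ldots,x_d)U^T$ with $x_1\geq x_2\geq\ldots\geq x_d\geq 0$, and apply Lemma~\ref{lm:thresh} to obtain the unique $k\in\{0,\ldots,j-1\}$ with $x_k > \frac{1}{j-k}\sum_{i>k} x_i \geq x_{k+1}$. Note that because $X$ has rank at least $j$, $x_j>0$, so $\sum_{i>k} x_i \geq x_j >0$ and $x_1\geq\ldots\geq x_k\geq x_j>0$. I would then set
\[
t \;\eqdef\; \frac{j-k}{\sum_{i>k} x_i},
\]
and define $W \eqdef U\diag(x_1^{-1},\ldots,x_k^{-1},t,\ldots,t)U^T$. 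The strict inequality $x_k > 1/t$ from Lemma~\ref{lm:thresh} translates to $x_k^{-1} < t$, so the eigenvalues of $W$, sorted in increasing order, are exactly $\lambda_1 = x_1^{-1} \leq \ldots \leq \lambda_k = x_k^{-1} < \lambda_{k+1} = \ldots = \lambda_d = t$, matching the structure assumed in Lemma~\ref{lm:subdiffs} with $\ell = d$ and $\lambda_j = t$. In particular $W\succ 0$.

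Next I would verify that $-X \in \partial\Delta_j(W)$. Because $X$ and $W$ share the eigenbasis $U$, Lemma~\ref{lm:unit-invariant} and Lemma~\ref{lm:subdiffs} reduce this to checking that $\mu \eqdef (-x_1,\ldots,-x_d)$ lies in $\partial\delta_j(\lambda(W))$. The first block is immediate: $\mu_i = -x_i = -\lambda_i^{-1}$ for $i \leq k$. The last block is vacuous since $\ell = d$. For the middle block, I need $(-x_{k+1},\ldots,-x_d)\in -t^{-1}V_{j-k,d-k}$, i.e.\ $(x_{k+1},\ldots,x_d)\in t^{-1}V_{j-k,d-k}$. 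By Lemma~\ref{lm:unij-poly}, this amounts to $\sum_{i>k} x_i = (j-k)t^{-1}$, which holds by the definition of $t$, and $0 \leq x_i \leq t^{-1}$ for $i>k$, which holds because $x_{k+1}$ is the largest of these and $x_{k+1}\leq \frac{1}{j-k}\sum_{i>k}x_i = t^{-1}$ by the choice of $k$.

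Finally, I would directly compute
\[
\Delta_j(W) = -\sum_{i=1}^k \ln(x_i^{-1}) - (j-k)\ln t = \sum_{i=1}^k \ln x_i + (j-k)\ln\!\left(\frac{1}{j-k}\sum_{i>k} x_i\right) = \gamma_j(x) = \Gamma_j(X),
\]
using that the $j$ smallest eigenvalues of $W$ are $x_1^{-1},\ldots,x_k^{-1}$ together with $j-k$ copies of $t$. I do not anticipate a serious obstacle: the construction is essentially forced by the form of $\partial\Delta_j$ given in Lemma~\ref{lm:subdiffs}, and the only thing to be careful about is that the threshold condition of Lemma~\ref{lm:thresh} exactly matches both the membership condition for $V_{j-k,d-k}$ and the strict gap $\lambda_k<\lambda_{k+1}$ needed for the subdifferential structure to apply.
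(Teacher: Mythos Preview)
Your proof is correct and follows essentially the same approach as the paper: diagonalize $X$, invoke Lemma~\ref{lm:thresh} to locate the threshold index $k$, and build $W$ in the same eigenbasis by inverting the top $k$ eigenvalues and flattening the rest. The only difference is cosmetic: the paper sets the eigenvalues of $W$ beyond position $r=\rank X$ to $(\nu-\epsilon)^{-1}$ (so that $\ell=r$ in Lemma~\ref{lm:subdiffs} and the last block absorbs the zero eigenvalues of $X$), whereas you set all eigenvalues beyond position $k$ to the single value $t=\nu^{-1}$ (so that $\ell=d$ and the zero eigenvalues of $X$ are handled inside the middle block via $0\in[0,t^{-1}]$). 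Your variant is slightly cleaner since it avoids the auxiliary $\epsilon$.
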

\begin{proof}
  Let $r$ be the rank of $X$, and let $\mu_1 \geq \ldots\geq
  \mu_d$ be its eigenvalues. Let $U$ be an orthonormal matrix such
  that $X = U\diag(\lambda)U^T$ for $\mu = (\mu_1, \ldots,
  \mu_d)$. Assume that $k$ is a positive integer strictly smaller
  than $j$ such that $ \mu_k > \frac{\sum_{i > k}{\mu_i}}{j -
    k} \geq \mu_{k+1}$ and define $\nu \eqdef \frac{\sum_{i >
      k}{\lambda_i}}{j - k}$. A unique such choice of $k$ exists
  by Lemma~\ref{lm:thresh}. Moreover, since $X$ has rank at least $j$,
  $\lambda_1 \geq \ldots \geq \lambda_{k+1} > 0$, which also implies
  $\nu \geq 0$. Therefore, the following  vector $\lambda$ is
  well-defined for any $\nu >\epsilon > 0$:
  \[
  \lambda_i \eqdef
  \begin{cases}
    \mu_{i}^{-1} &i \leq k\\
    \nu^{-1} &k < i \leq r\\
    (\nu - \epsilon)^{-1} & i > r
  \end{cases},
  \]
  Let us set $W \eqdef U\diag(\lambda)U^T$. By
  Lemma~\ref{lm:subdiffs}, to prove that $X \in -\partial
  \Delta_j(W)$, it suffices to show that $(\mu_{k+1}, \ldots,
  \mu_r) \in \nu V_{j-k, r-k}$. This inclusion follows from
  Lemma~\ref{lm:unij-poly} because, by the choice of $\nu$ and $k$, $0
  \leq \lambda_i\leq \nu$ for all $k+1 \leq i \leq r$, and $\sum_{i =
    k+1}^r{\mu_i} = (j-k)\nu$. 

  The equality $\Gamma_j(X) = \Delta_j(W)$ follows by a
  calculation. By the choice of $k$ and $\nu$, $\mu_1 \geq \ldots \geq
  \mu_k > \nu$. Therefore, the $j$ smallest eigenvalues of $W$ are
  $\lambda_1 \leq \ldots \leq \lambda_j$, and we have
  \[
  \Delta_j(W) = - \sum_{i = 1}^j{\ln\lambda_j} = \sum_{i =
    1}^k{\ln\mu_k} + (j-k)\ln \nu = \Gamma_j(X).
  \]
  This completes the proof of the lemma. 
\end{proof}

\begin{proof}[Proof of Theorem~\ref{thm:jdim-duality}]
    Let us define $\{W: W \succ 0\}$ to be the domain for the
  constraints \eqref{eq:jdim-contain} and the objective function
  \eqref{eq:jdim-obj}. This makes the constraint $W \succ 0$ implicit.
  The optimization problem is convex by Lemma~\ref{lm:subdiffs}. Is is
  also always feasible: for example, if $r = \max_{i =
    1}^n{\|v_i\|_2^2}$, then $r^{-1}I$ is a feasible
  solution. Slater's condition is therefore satisfied and strong
  duality holds. To prove the theorem, it suffices to show that the
  dual problem to \eqref{eq:jdim-obj}--\eqref{eq:jdim-psd} is
  equivalent to \eqref{eq:jdim-dual-obj}--\eqref{eq:jdim-dual-pos}. 

  The Lagrange dual function for
  \eqref{eq:jdim-obj}--\eqref{eq:jdim-psd} is 
  \[
  g(c) = \inf_{W \succ 0} \Delta_j(W) + \sum_{i = 1}^n{c_i v_i^T Wv_i}
  - \sum_{i = 1}^n{c_i}.
  \]
  A matrix $W \succ 0$ achieves the minimum above if and only if $0
  \in \partial g(c)$, which, by the additivity of subgradients, is
  equivalent to $\sum_{i = 1}^n{c_i v_iv_i^T} \in
  -\partial\Delta_j(W)$. Define $X \eqdef \sum_{i = 1}^n{c_i
    v_iv_i^T}$. Consider first the case in which $X$ has rank less
  than $j$. Let $t \geq 0$ be a parameter, and let $\Pi$ be an
  orthogonal projection matrix onto the nullspace of $X$. Consider
  the matrix $W \eqdef I + t\Pi$. The sum $\sum_{i = 1}^n{c_i v_i^T
    Wv_i} = \tr(XW) = \tr(X)$ remains bounded for all $t$, while
  $\Delta_j(W)$ goes to $-\infty$ as $t \to \infty$. Therefore $g(c) =
  -\infty$ in this case.

  Next we consider the case in which $X$ has rank
  at least $j$. Then, by Lemma~\ref{lm:subgr-soln}, there exists a $W$
  such that $X \in -\partial\Delta_j(W)$, and, therefore, this $W$
  achieves $g(c)$. From Lemma~\ref{lm:subdiffs} and  $X \in
  -\partial\Delta_j(W)$, it follows that  $\sum_{i = 1}^n{c_i v_i^T
    Wv_i} = \tr(XW) = j$. Also using the fact that, by
  Lemma~\ref{lm:subgr-soln}, $\Delta_j(W) = \Gamma_j(X)$, we have
  \begin{equation*}\label{eq:dual-func}
  g(c) = \Delta_j(W) + j - \sum_{i = 1}^n{c_i} = \Gamma_j\Bigl(\sum_{i = 1}^n{c_i v_i v_i^T}\Bigr) + j - \sum_{i = 1}^n{c_i}. 
  \end{equation*}
  To finish the proof we show that any $c$ that maximizes the right
  hand side above satisfies $\sum_{i = 1}^n{c_i}= j$, and, therefore,
  the optimal value of the dual problem, $\max\{g(c): c_i \geq 0 \
  \forall 1 \leq i \leq n\}$, is equal to the optimal value of
  \eqref{eq:jdim-dual-obj}--\eqref{eq:jdim-dual-pos}. Let us fix some
  arbitrary $c$ such that $c_i \geq 0$ for all $i$ and $\sum_{i =
    1}^n{c_i} = j$, and consider the function $h(t)\eqdef g(tc)$,
  defined over positive real numbers $t$. It will be enough to show
  that the unique maximizer of $h(t)$ is $t=1$. Since $h$ is a
  restriction of a convex function, it is also convex, and it is
  enough to show that $1$ is the unique solution of $\frac{dh}{dt} =
  0$. Let $\lambda_1 \geq \ldots \geq \lambda_d$ be the eigenvalues of
  $\sum_{i = 1}^n{c_i v_i v_i^T}$, and let $k$ be the unique integer
  with which $\Gamma_j(\sum_{i = 1}^n{c_i v_i v_i^T})$ is computed,
  i.e.~$k$ is such that $\lambda_k > \frac{\sum_{i >
      l}{\lambda_i}}{j-k} \geq \lambda_{k+1}$. The eigenvalues of
  $\sum_{i = 1}^n{tc_i v_i v_i^T}$ are $t\lambda_1 \geq \ldots \geq
  t\lambda_d$, so the condition $t\lambda_k > \frac{\sum_{i >
      l}{t\lambda_i}}{j-k} \geq t\lambda_{k+1}$ is clearly satisfied,
  and, by Lemma~\ref{lm:thresh}, this choice of $k$ is
  unique. Therefore, $\Gamma_j\Bigl(\sum_{i = 1}^n{c_i v_i
    v_i^T}\Bigr)$ and $\Gamma_j\Bigl(\sum_{i = 1}^n{tc_i v_i
    v_i^T}\Bigr)$ are computed with the same $k$. By the basic
  properties of logarithms, $\Gamma_j\Bigl(t\sum_{i = 1}^n{c_i v_i
    v_i^T}\Bigr) = \Gamma_j\Bigl(\sum_{i = 1}^n{c_i v_i v_i^T}\Bigr) +
  j\ln t$, and, therefore,
  \[
  h(t) = g(c) + j\ln t - (t-1)j.
  \]
  The derivative $\frac{dh}{dt} = \frac{j}{t} - j$ vanishes only at
  $t = 1$, which implies that $h(t)= g(tc)$ is maximized at
  $t=1$. This proves the claim and finishes the proof of the theorem.
\end{proof}

\subsection{The Rounding Algorithm}

Our rounding algorithm, shown as Algorithm~\ref{alg:jdim}, is nearly
identical to Algorithm~\ref{alg:fulldim}, except for using probability
weights proportional to an optimal solution of
\eqref{eq:jdim-dual-obj}--\eqref{eq:jdim-dual-pos}. The approximation
guarantee for the algorithm is given by the following theorem.

\begin{algorithm}[t]
  \caption{Randomized Sampling for $j$-MSD} \label{alg:jdim}
  \begin{algorithmic}
    \REQUIRE Positive semidefinite $n\times n$ matrix $M$ of rank $d$.

    \STATE Compute a Cholesky factorization $M = V^T V$ of $M$, $V \in
    \R^{d \times n}$. Let $v_1, \ldots, v_n\in \R^d$ be the columns of $V$;

    \STATE Compute an $\alpha$-optimal solution $c_1, \ldots, c_n$ of
    \eqref{eq:jdim-dual-obj}--\eqref{eq:jdim-dual-pos} for $v_1,
    \ldots, v_n$;

    \STATE $S \eqdef \emptyset$;

    \FOR{$k = 1, \ldots, j$}
      \STATE Sample $i$ from $[n]$ according to the probability
      distribution given by $\Pr[i = \ell] = \frac{1}{j}c_\ell$;
      \STATE Add $i$ to the multiset $S$;
    \ENDFOR

    \ENSURE $S$. 
  \end{algorithmic}
\end{algorithm}

\begin{theorem}\label{thm:jdim-main}
  Let the random multiset $S$ be the output of
  Algorithm~\ref{alg:fulldim} for input $M$ and $\alpha$-optimal $c_1,
  \ldots, c_n$. Then
  \[
  \E \det(M_{S,S}) \geq \frac{j!}{j^j}e^{-\alpha}\ \msd{j}{M} \sim \sqrt{2\pi  j} e^{-j-\alpha}\ \msd{j}{M}.  
  \]
\end{theorem}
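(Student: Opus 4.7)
The plan is to follow the template of the proof of Theorem~\ref{thm:fulldim-main}, with two new ingredients to handle $j < d$: a generalized Cauchy--Binet identity that yields an expression in terms of the elementary symmetric function $e_j$, and a Schur-concavity argument that relates $e_j$ to the non-smooth dual objective $\Gamma_j$ from Theorem~\ref{thm:jdim-duality}.

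First I would extend Lemma~\ref{lm:sample-fulldim} to a size-$j$ multiset. Multisets with repeats contribute $0$, and each $T \in \binom{[n]}{j}$ is sampled (as an ordered sequence) in $j!$ ways of equal probability, so with $p_i \eqdef c_i/j$,
\[
\E\det(M_{S,S}) = j! \sum_{T \in \binom{[n]}{j}} \Big(\prod_{i \in T} p_i\Big) \det(M_{T,T}).
\]
I would then expand $\det(M_{T,T}) = \det(V_T^T V_T) = \sum_{R \in \binom{[d]}{j}} \det(V_{R,T})^2$ via~\eqref{eq:binetcauchy}, swap the order of summation, and use Cauchy--Binet a second time to recognize $\sum_T (\prod_{i \in T} c_i)\det(V_{R,T})^2 = \det(V_R \diag(c) V_R^T) = \det(X_{R,R})$, where $X \eqdef \sum_i c_i v_i v_i^T$. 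Finally, identity~\eqref{eq:sym-dets} collapses $\sum_R \det(X_{R,R})$ to $e_j(\mu_1,\ldots,\mu_d)$, where $\mu_1 \geq \cdots \geq \mu_d$ are the eigenvalues of $X$. Accounting for the factor $j^{-j}$ from $p_i = c_i/j$, this yields $\E\det(M_{S,S}) = \tfrac{j!}{j^j} e_j(\mu)$.

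The analogue of the first (simpler) argument in Theorem~\ref{thm:fulldim-main} now goes through. The indicator vector $a$ of a size-$j$ set $T^*$ attaining $\msd{j}{M}$ is feasible for \eqref{eq:jdim-dual-obj}--\eqref{eq:jdim-dual-pos}, and $X_a = V_{T^*}V_{T^*}^T$ has exactly $j$ positive eigenvalues whose product is $\msd{j}{M}$. Applying AM--GM to the trailing $j-k$ of these eigenvalues gives $\Gamma_j(X_a) \geq \ln \msd{j}{M}$ whatever the threshold $k$ from Lemma~\ref{lm:thresh} turns out to be, so by $\alpha$-optimality of $c$ we get $\Gamma_j(X) \geq \ln \msd{j}{M} - \alpha$.

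The main obstacle is showing $e_j(\mu) \geq \exp(\Gamma_j(X))$; this is what forces us through Schur-concavity (Lemma~\ref{lm:sym-schur}). Let $k$ and $\nu \eqdef (j-k)^{-1}\sum_{i>k} \mu_{(i)}$ be as in the definition of $\gamma_j(\mu)$, and introduce the auxiliary $d$-vector
\[
\tau \eqdef \bigl(\mu_{(1)},\,\ldots,\,\mu_{(k)},\,\underbrace{\nu,\,\ldots,\,\nu}_{j-k},\,0,\,\ldots,\,0\bigr).
\]
The defining property $\mu_{(k)} > \nu \geq \mu_{(k+1)}$ makes $\tau$ non-increasing, and only the subset $\{1,\ldots,j\}$ contributes to $e_j(\tau)$, giving $e_j(\tau) = \prod_{i\leq k}\mu_{(i)} \cdot \nu^{j-k} = \exp(\Gamma_j(X))$. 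I would then verify $\mu \prec \tau$: partial sums agree at indices $\leq k$ and at $d$; for $k < r \leq j$ the difference is $(r-k)\nu - \sum_{i=k+1}^r \mu_{(i)} \geq 0$ since $\mu_{(i)} \leq \nu$ for $i > k$; and for $j < r < d$ the partial sum of $\tau$ has already reached $\sum_i \mu_i$. Schur-concavity then gives $e_j(\mu) \geq e_j(\tau) = \exp(\Gamma_j(X)) \geq e^{-\alpha}\msd{j}{M}$. Combining with Step~1 and invoking Stirling's formula for the asymptotic proves the theorem. One incidental check needed along the way is that $X$ has rank at least $j$, so that $\nu > 0$ and the eigenvalues feeding $\Gamma_j$ are well-defined; this holds because an $\alpha$-optimal $c$ must keep $\Gamma_j(X)$ finite.
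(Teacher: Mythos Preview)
Your proposal is correct and follows essentially the same route as the paper: compute $\E\det(M_{S,S}) = \tfrac{j!}{j^j}e_j(\text{eigenvalues of }X)$, construct the auxiliary vector $\tau$ majorizing the eigenvalue vector, apply Schur-concavity of $e_j$ to get $e_j\geq \exp(\Gamma_j(X))$, and finish via $\alpha$-optimality. The only differences are cosmetic: the paper reaches $e_j$ in one step by applying~\eqref{eq:sym-dets} to $P^{1/2}MP^{1/2}$ rather than your double Cauchy--Binet, and it closes with the duality route (Lemma~\ref{lm:jdim-ub} plus Theorem~\ref{thm:jdim-duality}) rather than your direct feasibility argument---but it explicitly notes the latter as an alternative right after the proof, and in fact $\Gamma_j(X_a)=\ln\msd{j}{M}$ holds with equality there, since the threshold condition forces $\nu_{k+1}=\cdots=\nu_j$ when exactly $j$ eigenvalues are nonzero.
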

\begin{proof}
  Let us define $p_i \eqdef \frac1j c_i$ for $1 \leq i \leq n$, and $P
  \eqdef \diag(p_1, \ldots, p_n)$. If some element in $S$ repeats,
  then $\det(M_{S,S}) = 0$. On the other hand, each set $S$ can be
  sampled in $j!$ different ways, one for each ordering of its
  elements. We can then write the expectation of $\det(M_{S,S})$ as
  \begin{align*}
    \E \det(M_{S,S}) &= \sum_{S \in {[n]\choose j}}{j!\prod_{i \in
        S}{p_i}\det(M_{S,S})}
    = j!\sum_{S \in {[n]\choose j}}{\det((P^{1/2}MP^{1/2})_{S,S})}
    = j!e_j(\lambda),
  \end{align*}
  where $\lambda \in \R_+^n$ is the vector of eigenvalues of
  $P^{1/2}MP^{1/2} = P^{1/2}V^TVP^{1/2}$, and the final equality follows by \eqref{eq:sym-dets}. Let $\lambda' \in \R^d$ be the vector of
  eigenvalues of $VPV^T = \sum_{i = 1}^n{p_i v_i v_i^T}$; because all
  non-zero entries of $\lambda$ and $\lambda'$ are the same, we have
  $\E \det(M_{S,S}) = j! e_j(\lambda')$. 

  Let us assume, without loss of generality, that $\lambda'_1 \geq
  \ldots \geq \lambda'_d$ and let $k$ be the unique integer guaranteed
  by Lemma~\ref{lm:thresh} such that $\lambda'_k > \frac{\sum_{i >
      k}{\lambda'_i}}{j - k} \geq \lambda'_{k+1}$. Define a vector
  $\mu \in \R^d$ by $\mu_i \eqdef \lambda'_i$ for $1 \leq i \leq k$,
  $\mu_i = \frac{\sum_{i > k}{\lambda'_i}}{j - k}$ for $k+1 \leq i
  \leq j$ and $\mu_i = 0$ for $i > j$. We claim that $\lambda'$ is
  majorized by $\mu$. Indeed, we have $\sum_{i = 1}^\ell{\lambda'_i} =
  \sum_{i = 1}^\ell{\mu_i}$ for $1 \leq \ell \leq k$ by definition;
  for $k+1 \leq \ell \leq j$, we have, by the choice of $k$,
  \[
  \sum_{i = 1}^\ell{\lambda'_i} \leq \sum_{i = 1}^{k}{\lambda'_i} +
  (\ell-k)\lambda'_{k+1} \leq \sum_{i = 1}^{k}{\lambda'_i} + (\ell-k)\frac{\sum_{i >
      k}{\lambda'_i}}{j - k} = \sum_{i = 1}^\ell{\mu_i}.
  \]
  Finally, for $\ell > j$, since $\mu_i = 0$ for $i > j$,
  \[
  \sum_{i = 1}^\ell{\lambda'_i} \leq \sum_{i = 1}^d{\lambda'_i} =
  \sum_{i = 1}^j{\mu_i}  =\sum_{i = 1}^\ell{\mu_i},
  \]
  and the inequality holds with equality for $\ell = d$. This proves
  that $\lambda' \prec \mu$, and by the Schur-concavity of $e_j$
  (Lemma~\ref{lm:sym-schur}), we have $e_j(\lambda') \geq
  e_j(\mu)$. Notice that, by our construction of $\mu$, $e_j(\mu) =
  \mu_1\ldots \mu_j = 
  \exp(\Gamma_j(\sum_{i = 1}^n{p_i v_i v_i})) = j^{-j}
  \exp(\Gamma_j(\sum_{i = 1}^n{c_i v_i v_i}))$. Combining the
  inequalities we derived so far with Lemma~\ref{lm:jdim-ub} and
  Theorem~\ref{thm:jdim-duality}, and since $c_1, \ldots, c_n$ is
  $\alpha$-optimal, we get
  \[
  \E \det(M_{S,S}) = j!e_j(\lambda') \geq j!e_j(\mu) =
  \frac{j!}{j^j}\exp\left(\Gamma_j\Bigl(\sum_{i = 1}^n{c_i v_i v_i}\Bigr)\right) \geq
  \frac{j!}{j^j}e^{-\alpha}\msd{j}{M}. 
  \]
  The asymptotic estimate $\frac{j!}{j^j}\sim \sqrt{2\pi  j} e^{-j}$
  is again just Sterling's approximation to $j!$. This completes the
  proof of the theorem.
\end{proof}
In the proof above, we used Lemma~\ref{lm:jdim-ub} and
Theorem~\ref{thm:jdim-duality} to show that the optimal value of
\eqref{eq:jdim-dual-obj}--\eqref{eq:jdim-dual-pos} is at least $\ln
\msd{j}{M}$. This can also be done more directly by showing that the
indicator vector of any set $S \in {[n]\choose j}$ is feasible for
\eqref{eq:jdim-dual-obj}--\eqref{eq:jdim-dual-pos}  and achieves value
$\ln \det(M_{S,S})$. However, it is far from obvious how to derive
\eqref{eq:jdim-dual-obj}--\eqref{eq:jdim-dual-pos} in a natural manner
without going through enclosing ellipsoids!

Since \eqref{eq:jdim-dual-obj}--\eqref{eq:jdim-dual-pos} is a convex
optimization problem, we can use the ellipsoid method to to compute an
$\alpha$-optimal solution in time polynomial in $n, d, \log
\alpha^{-1}$~\cite{GLS-ellipsoid}. 
Together with Algorithm~\ref{alg:jdim}, we get an
approximation factor of $\frac{1}{\sqrt{2\pi d}}
((1+\epsilon)e)^{j}$ in time polynomial in $n$, $d$, and
$\log \epsilon^{-1}$. It is also conceivable that the barycentric
coordinate descent method of Khachiyan~\cite{Khachiyan96-John} can be
extended to solve \eqref{eq:jdim-dual-obj}--\eqref{eq:jdim-dual-pos}. 

In Section~\ref{sec:derand} we show how to derandomize Algorithm~\ref{alg:jdim}
using the method of conditional expectations.

\section{Derandomizing the Algorithms}
\label{sec:derand}

In Theorems~\ref{thm:fulldim-main} and~\ref{thm:jdim-main} we only proved our
approximation guarantees in expectation. A priori, this does not give a
useful bound on the probability that the set output by
Algorithm~\ref{alg:fulldim} or~\ref{alg:jdim} is close to
optimal. However, it is not hard to derandomize the algorithms using
the method of conditional expectation. The deterministic algorithm is
presented as Algorithm~\ref{alg:jdim-derand}.

\begin{algorithm}[t]
  \caption{Deterministic Approximation Algorithm for $j$-MSD} \label{alg:jdim-derand}.
  \begin{algorithmic}
    \REQUIRE Positive semidefinite $n\times n$ matrix $M$ of rank $d$;
    integer $1 \leq j \leq d$.

    \STATE Compute a Cholesky factorization $M = V^T V$ of $M$, $V \in
    \R^{d \times n}$. Let $v_1, \ldots, v_n\in \R^d$ be the columns of $V$;

    \STATE Compute an $\alpha$-optimal solution $c_1, \ldots, c_n$ of
    \eqref{eq:jdim-dual-obj}--\eqref{eq:jdim-dual-pos} for $v_1,
    \ldots, v_n$;

    \STATE $S \eqdef \emptyset$;
    \STATE $C \eqdef \diag(c_1, \ldots, c_n)$;

    \FOR{$k = 1, \ldots, j$} \STATE For a set $T \subseteq [n]$, let
    $\lambda(T)$ be the vector of eigenvalues of the matrix
    $(C^{1/2}V^T\Pi(T) VC^{1/2})_{[n]\setminus T, [n]\setminus T}$, where
    $\Pi(T)$ is the projection matrix onto the orthogonal complement of
    $\vspan\{v_i: i \in T\}$.  Define the potential function
      \[
      \Phi(T) \eqdef \det(M_{T,T}) e_{j-|T|}(\lambda(T)).
      \]
      \STATE Let $i^*$ maximize $\Phi(S \cup \{i\})$.
      Add $i^*$ to the set $S$.
    \ENDFOR

    \ENSURE $S$. 
  \end{algorithmic}
\end{algorithm}

\begin{theorem}\label{thm:jdim-derand}
  The set $S$ output by Algorithm~\ref{alg:jdim-derand} satisfies
  $\det(M_{S,S}) \geq \frac{j!}{j^j}e^{-\alpha}\msd{j}{M}$. 
\end{theorem}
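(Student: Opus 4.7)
The plan is to view Algorithm~\ref{alg:jdim-derand} as the method-of-conditional-expectations derandomization of Algorithm~\ref{alg:jdim}, with $\Phi$ playing the role of the guiding potential. The central identity I would establish is that for every set $T = \{t_1, \ldots, t_k\} \subseteq [n]$ of size $k \leq j$,
\[
\E\bigl[\det(M_{S',S'})\,\big|\, s_1 = t_1, \ldots, s_k = t_k\bigr] \;=\; \frac{(j-k)!}{j^{j-k}}\,\Phi(T),
\]
where $s_1, \ldots, s_j$ are the samples that Algorithm~\ref{alg:jdim} would draw using the same $c$, and $S' = \{s_1, \ldots, s_j\}$ is the resulting multiset. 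At $T = \emptyset$ this reproduces, up to the factor $\tfrac{j!}{j^j}$, exactly the quantity analyzed in Theorem~\ref{thm:jdim-main}.

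To prove the identity, I would first apply the Schur complement formula
\[
\det(M_{T\cup R,\,T\cup R}) \;=\; \det(M_{T,T})\cdot \det\bigl(V_R^T\,\Pi(T)\,V_R\bigr),
\]
which holds whenever $V_T$ has full column rank and $R$ is disjoint from $T$, and note that both sides vanish if $V_T$ is rank-deficient, if $R \cap T \neq \emptyset$, or if $R$ has a repeated element. Averaging the Schur-complement expression over the i.i.d.\ remaining samples and pulling the probability factors $c_i/j$ inside the determinant, exactly as in the proof of Lemma~\ref{lm:sample-fulldim}, rewrites the conditional expectation as
\[
\frac{(j-k)!}{j^{j-k}}\cdot\det(M_{T,T})\cdot\sum_{R\in\binom{[n]\setminus T}{j-k}}\det\bigl(N_{R,R}\bigr),
\]
where $N \eqdef C^{1/2}V^T\Pi(T)VC^{1/2}$. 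Applying identity~\eqref{eq:sym-dets} to the principal submatrix of $N$ indexed by $[n]\setminus T$ collapses the inner sum to $e_{j-k}(\lambda(T))$, which is precisely the factor in the definition of $\Phi(T)$, establishing the identity.

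Given the identity, the remainder is routine. Conditioning on the $(k+1)$-st sample and using $\sum_i c_i/j = 1$ yields the averaging equation
\[
\sum_{i=1}^n \frac{c_i}{j}\,\Phi(T\cup\{i\}) \;=\; \frac{j-k}{j}\,\Phi(T),
\]
so choosing $i^*$ to maximize $\Phi(T\cup\{i\})$ guarantees $\Phi(T\cup\{i^*\}) \geq \tfrac{j-k}{j}\,\Phi(T)$. Telescoping for $k = 0, 1, \ldots, j-1$ yields $\Phi(S) \geq \tfrac{j!}{j^j}\,\Phi(\emptyset)$, and since $e_0 \equiv 1$ we have $\Phi(S) = \det(M_{S,S})$. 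Finally, $\Pi(\emptyset) = I$, so $\lambda(\emptyset)$ collects the nonzero eigenvalues of $\sum_i c_i v_i v_i^T$, and the chain of inequalities inside the proof of Theorem~\ref{thm:jdim-main} reads precisely $\Phi(\emptyset) \geq e^{-\alpha}\,\msd{j}{M}$. Combining these two inequalities completes the proof. The main obstacle is the conditional-expectation identity: one must carefully handle the multiset and rank-deficiency edge cases, and verify that the Binet--Cauchy sum collapses cleanly onto the principal submatrix of $N$ indexed by $[n]\setminus T$ so that identity~\eqref{eq:sym-dets} applies.
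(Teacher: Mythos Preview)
Your proposal is correct and follows essentially the same approach as the paper: both establish that $\Phi(T)$ is proportional to the conditional expectation of $\det(M_{S,S})$ in Algorithm~\ref{alg:jdim} via the base-times-height (Schur complement) formula and identity~\eqref{eq:sym-dets}, then invoke the method of conditional expectations together with Theorem~\ref{thm:jdim-main}. Your version is in fact more careful about the $(j-k)!$ normalization and spells out the telescoping explicitly, whereas the paper's displayed identity $\Phi(T)=j^{\,j-|T|}\E[\det(M_{S,S})\mid T\subseteq S]$ uses a somewhat informal notion of conditional expectation that drops this factor.
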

\begin{proof}
  By the method of conditional expectation~\cite{AlonSpencer08}, and
  Theorem~\ref{thm:jdim-main}, it is enough to show that
  \[
  \Phi(T) = j^{j-|T|} \E[\det(M_{S,S})| T\subseteq S],
  \]
  where the expectation is over the distribution on the output of
  Algorithm~\ref{alg:jdim}. 
  Expanding the expectation on the right hand side, we have
  \begin{align*}
  \E[\det(M_{S,S})| T\subseteq S] &= \sum_{\substack{S \in {[n]\choose
        j}\\    T\subseteq S}}{\left(\prod_{i \in S\setminus
      T}{\frac{c_i}{j}}\right)\ \det(M_{S,S})}\\
  &= j^{|T|-j} \sum_{\substack{S \in {[n]\choose
        j}\\    T\subseteq S}}\left(\prod_{i \in S\setminus T}{c_i}\right)\
  \det(V_T^TV_T) \det(V_{S\setminus T}^T\Pi(T)V_{S\setminus T}) \\
  &= j^{|T|-j} \det(V_T^TV_T) \sum_{\substack{S \in {[n]\choose
        j}\\    T\subseteq S}}
  \det((C^{1/2}V^T\Pi(T)VC^{1/2})_{S\setminus T, S\setminus T})\\
  &= j^{|T|-j} \det(M_{T,T}) e_{j-|T|}(\lambda(T)) = j^{|T|-j} \Phi(T).
  \end{align*}
  The second equality follows from the ``base times height'' formula
  for the determinant. The penultimate equality follows from
  \eqref{eq:sym-dets}. 
\end{proof}

To implement Algorithm~\ref{alg:jdim-derand}, we need to be able to
evaluate the elementary symmetric polynomial $e_{j -
  |T|}(\lambda(T))$. This can be done in polynomial time by expanding
the characteristic polynomial of the matrix $(C^{1/2}V^T\Pi(T)
VC^{1/2})_{[n]\setminus T, [n]\setminus T}$: $e_{j-|T|}(\lambda(T))$ is
equal to $(-1)^{j-|T|}$ times the coefficient of the term of degree
$d-j+|T|$. 

\section{Restricted Invertibility Principles}

The celebrated Restricted Invertibility Principle (RIP) of Bourgain
and Tzafriri~\cite{bour-tza} is a powerful generalization of the
simple fact in linear algebra that a matrix of rank $r$ has an
invertible submatrix with at least $r$ columns. The RIP shows that if
the ``robust rank'' of a matrix is large, i.e. the matrix has many
large singular values, then it contains a proportionally large
submatrix which is well-invertible, i.e.~its inverse is bounded in
operator norm. The RIP has had many applications in Banach space
theory, asymptotic convex geometry, statistics, and recently in
discrepancy theory and private data analysis.

Our analysis of Algorithm~\ref{alg:jdim} can be adapted to prove an
analogue of the RIP for volume. Let us first recall a formal statement
of the RIP, in a version due to Spielman and Srivastava. We use
$\|\cdot \|_{HS}$ for the Hilbert-Schmidt (Frobenius) norm, and
$\|\cdot\|_{2\to 2}$ for the $\ell_2 \to \ell_2$ operator norm.

\begin{theorem}[\cite{bt-constructive}]\label{thm:bt}
  Let $v_1, \ldots, v_n \in \R^d$, and $c_1, \ldots, c_n \in \R_+$ be
  such that $\sum_{i = 1}^n{c_i v_i v_i^T} = I$. Let $L: \ell_2^d \to
  \ell_2^d$ be a linear operator. Then for any $\epsilon$, $0 <
  \epsilon < 1$, there exists a subset $S \subseteq [n]$ of size $|S|
  \geq \left\lfloor \epsilon^2 \frac{\|L\|_{HS}^2}{\|L\|_{2 \to 2}}\right\rfloor$
  such that
  \[
  \left\|\sum_{i \in S}{x_i Lv_i}\right\|_2 \geq
  \frac{(1-\epsilon)\|L\|_{HS}}{\sqrt{\sum_{i = 1}^n{c_i}}} \|x\|_2
  \]
  holds for all $x \in \R^S$. Moreover, such a set $S$ can be computed
  in deterministic polynomial time.
\end{theorem}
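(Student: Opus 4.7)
The plan is to follow the Batson--Spielman--Srivastava barrier argument, building the multiset $S$ greedily while maintaining a tight spectral sandwich on a running matrix $A_t = \sum_{\ell \leq t} \tau_\ell (Lv_{i_\ell})(Lv_{i_\ell})^T$. Two soft spectral barriers are tracked: an upper barrier $u_t$ with potential $\Phi^{u_t}(A_t) = \tr((u_t I - A_t)^{-1})$ and a lower barrier $l_t$ with potential $\Phi_{l_t}(A_t) = \tr((A_t - l_t I)^{-1})$, both computed on the range of $L$. These potentials explode as the eigenvalues of $A_t$ approach the barriers, so bounding them pins the spectrum of $A_t$ strictly inside $(l_t, u_t)$. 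I would initialize $A_0 = 0$ with $u_0 > 0$ and $l_0 < 0$ chosen so that the initial potentials are moderate.

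The core inductive step says: given $(A_t, u_t, l_t)$ and planned shifts $\delta_U, \delta_L > 0$, there exist $i^* \in [n]$ and $\tau > 0$ so that the rank-one update $A_{t+1} = A_t + \tau (Lv_{i^*})(Lv_{i^*})^T$ together with $u_{t+1} = u_t + \delta_U$, $l_{t+1} = l_t + \delta_L$ leaves both potentials non-increasing. By the Sherman--Morrison formula, the change in either potential under a rank-one update is a rational function of $\tau$ and of the quadratic forms $\langle Lv_i, (u_t I - A_t)^{-p} Lv_i\rangle$ and $\langle Lv_i, (A_t - l_t I)^{-p} Lv_i\rangle$ for $p = 1, 2$, so the barrier-preservation condition reduces to a scalar inequality indexed by $i$.

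Existence of a good $i^*$ follows by averaging the scalar inequality against the weights $c_i$. The isotropy hypothesis $\sum_i c_i v_i v_i^T = I$ converts weighted quadratic forms into traces: $\sum_i c_i \langle Lv_i, M Lv_i\rangle = \tr(MLL^T)$ for any $M \succeq 0$. The upper-barrier traces can be further bounded using $\tr(MLL^T) \leq \|L\|_{2\to 2}^2 \tr(M)$, and the lower-barrier terms relate to $\|L\|_{HS}^2$. Choosing $\delta_U$ proportional to $\|L\|_{2\to 2}^2$ and $\delta_L$ proportional to $\|L\|_{HS}^2/\sum_i c_i$ per step, the averaged inequality is strict, so at least one $i^*$ works; it can be located by enumerating $i \in [n]$ and checking the scalar inequality, which gives the deterministic polynomial-time construction.

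After $k = \lfloor \epsilon^2 \|L\|_{HS}^2/\|L\|_{2\to 2}^2 \rfloor$ rounds, the moving barriers yield $\lambda_{\min}(A_k) \geq l_k$ and $\lambda_{\max}(A_k) \leq u_k$ with $l_k/u_k$ at least the advertised ratio. Writing $A_k = \sum_\ell \tau_\ell (Lv_{i_\ell})(Lv_{i_\ell})^T$, the lower bound on $\lambda_{\min}(A_k)$ translates, after absorbing the per-step weights $\tau_\ell$, into the claimed lower bound $\|\sum_{i \in S} x_i Lv_i\|_2 \geq (1-\epsilon)\|L\|_{HS}/\sqrt{\sum_i c_i}\cdot\|x\|_2$ for every $x \in \R^S$. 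The hard part will be the tight tuning of $u_0, l_0, \delta_U, \delta_L$ so that the rational-function arithmetic from the Sherman--Morrison step collapses to exactly the advertised $(1-\epsilon)$ factor and exactly the advertised subset size; this careful bookkeeping is the heart of the original Spielman--Srivastava analysis and is where all the constants must be balanced simultaneously.
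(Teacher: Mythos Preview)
The paper does not prove Theorem~\ref{thm:bt}; it is quoted from~\cite{bt-constructive} and immediately followed by the remark that ``essentially the same proof'' covers the weighted isotropy hypothesis. There is therefore nothing in the paper to compare your argument against beyond the one-line attribution to Spielman and Srivastava.

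That said, your outline conflates two distinct barrier arguments. The Spielman--Srivastava proof of restricted invertibility uses only a \emph{lower} barrier and adds \emph{unweighted} rank-one terms $Lv_{i_\ell}(Lv_{i_\ell})^T$; the two-sided barrier with per-step scalars $\tau_\ell$ is their twice-Ramanujan sparsifier construction, which produces a \emph{weighted} sum approximating $LL^T$. In your write-up this creates a genuine gap: from $l_k I \preceq \sum_\ell \tau_\ell (Lv_{i_\ell})(Lv_{i_\ell})^T \preceq u_k I$ you cannot ``absorb the per-step weights $\tau_\ell$'' to conclude $\|\sum_{i\in S} x_i Lv_i\|_2 \geq c\|x\|_2$. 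Passing to the Gram side gives $D^{1/2}W^TWD^{1/2}\succeq l_k I$ with $D=\diag(\tau_\ell)$, hence only $W^TW\succeq l_k D^{-1}$, and without a uniform upper bound on the $\tau_\ell$ (which your scheme does not supply) no lower bound on $\lambda_{\min}(W^TW)$ follows. The fix is to drop the weights and the upper barrier entirely and run the one-sided argument, picking at each step an index $i^*$ (not previously chosen) for which the unit rank-one update keeps the lower potential bounded after the barrier shift; the averaging over $c_i$ against the isotropy condition then goes through exactly as you describe.
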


The version of the RIP above was proved by Spielman and Srivastava for
$c_1 = \ldots = c_n = 1$. However, essentially the same proof shows
the slight generalization formulated above.

Next we state our version of the RIP for determinants. 

\begin{theorem}\label{thm:rip-vol}
  Let $v_1, \ldots, v_n \in \R^d$, and $c_1, \ldots, c_n \in \R_+$ be
  such that $\sum_{i = 1}^n{c_i v_i v_i^T} = I$. Let $L: \ell_2^d \to
  \ell_2^d$ be a linear operator. Then for any $j \leq \left\lfloor
    \frac{\|L\|_{HS}^2}{\|L\|_{2 \to 2}}\right\rfloor$, there exists a
  subset $S \subseteq [n]$ of size $j$ such that the matrix $M \eqdef
  (\langle Lv_i, Lv_k\rangle)_{i, k \in S}$ satisfies
  \[
  \det(M) \geq \frac{j!}{j^j} \frac{\|L\|_{HS}^{2j}}{(\sum_{i = 1}^n{c_i})^j}
  \sim \sqrt{2\pi j}e^{-j} \frac{\|L\|_{HS}^{2j}}{(\sum_{i = 1}^n{c_i})^j}.
  \]
  Moreover, such a set $S$ can be computed in deterministic polynomial time.
\end{theorem}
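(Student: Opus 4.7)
I plan to apply the analysis of Theorem~\ref{thm:jdim-main} to the transformed vectors $u_i \eqdef Lv_i$, but with a directly exhibited feasible solution to \eqref{eq:jdim-dual-obj}--\eqref{eq:jdim-dual-pos} in place of an $\alpha$-optimal one. Let $C \eqdef \sum_{i=1}^n c_i$ and $\tilde c_i \eqdef (j/C)c_i$; then $\tilde c \geq 0$ and $\sum_i \tilde c_i = j$, so $\tilde c$ is feasible for \eqref{eq:jdim-dual-obj}--\eqref{eq:jdim-dual-pos} on the vectors $u_i$. Using the hypothesis $\sum_i c_i v_i v_i^T = I$, we obtain $\sum_i \tilde c_i u_i u_i^T = (j/C)LL^T$, whose eigenvalues are $\lambda_i = (j/C)\sigma_i^2$, with $\sigma_1 \geq \dots \geq \sigma_d \geq 0$ the singular values of $L$.

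Next I evaluate $\Gamma_j$ on this matrix. The hypothesis on $j$ is equivalent to $j\,\|L\|_{2\to 2}^2 \leq \|L\|_{HS}^2$, which in terms of $\lambda$ reads $\lambda_1 \leq (1/j)\sum_i \lambda_i$. By Lemma~\ref{lm:thresh} the unique integer $k$ with $\lambda_k > \sum_{i>k}\lambda_i/(j-k) \geq \lambda_{k+1}$ is then $k=0$, and the piecewise definition of $\gamma_j$ collapses to a single term:
\[
\Gamma_j\!\left(\sum_i \tilde c_i u_i u_i^T\right) = j\,\ln\!\frac{\sum_i \lambda_i}{j} = j\,\ln\!\frac{\|L\|_{HS}^2}{C}.
\]

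To conclude, I appeal to the proof of Theorem~\ref{thm:jdim-main}: the chain $\E\det(M_{S,S}) = j!\,e_j(\lambda') \geq j!\,e_j(\mu) = (j!/j^j)\exp\!\big(\Gamma_j(\sum_i \tilde c_i u_i u_i^T)\big)$ uses only the feasibility of the dual vector, so it holds verbatim with $\tilde c$ (the $\alpha$-optimality was invoked only in the subsequent comparison with $\msd{j}{\cdot}$, which we do not need). Hence Algorithm~\ref{alg:jdim} run on $u_1,\dots,u_n$ with input $\tilde c$ outputs a multiset $S$ satisfying $\E\det(M_{S,S}) \geq (j!/j^j)\|L\|_{HS}^{2j}/C^j$; some realization attains the bound, and is automatically a genuine set of size $j$ since any repeated element would force the determinant to vanish. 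For the deterministic polynomial-time claim I use Algorithm~\ref{alg:jdim-derand}: the conditional-expectation identity $\Phi(T) = j^{j-|T|}\E[\det(M_{S,S})\mid T \subseteq S]$ in the proof of Theorem~\ref{thm:jdim-derand} is purely combinatorial and makes no use of $\alpha$-optimality of the input $c$. Stirling's formula then yields the asymptotic. The only real conceptual obstacle is to notice that the rounding algorithm and its analysis depend on the input only through feasibility, so that the rescaled vector $\tilde c$ may be substituted directly and Lemma~\ref{lm:thresh} used to evaluate $\Gamma_j$ in closed form.
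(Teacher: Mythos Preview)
Your proof is correct and takes essentially the same approach as the paper: both sample indices proportionally to $c_i$, write $\E\det(M)=j!\,e_j(\cdot)$, and bound $e_j$ from below via Schur concavity using the hypothesis on $j$ to establish the needed majorization. The only difference is packaging---you route the Schur-concavity step through $\Gamma_j$ and the proof of Theorem~\ref{thm:jdim-main} (observing that $k=0$), whereas the paper re-derives this $k=0$ special case directly and self-containedly without invoking $\Gamma_j$.
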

\begin{proof}
  The proof is similar to that of Theorem~\ref{thm:jdim-main}. Let $X
  \eqdef (\langle Lv_i, Lv_k\rangle)_{i, k \in [n]}$. Let $C \eqdef
  \sum_{i = 1}^n{c_i}$ and define $p_i \eqdef {c_i}/{C}$. Let us
  sample $i_1, \ldots, i_j$ independently from $[n]$ so that for each
  $k$, $1 \leq k \leq j$, $\Pr[i_k = i] = p_i$. Define the random
  matrix $M = (\langle Lv_{i_k}, Lv_{i_\ell} \rangle)_{k, \ell \in
    [j]}$. If $i_k = i_\ell$ for some $k \neq \ell$, then $\det(M) =
  0$, and otherwise there are $j!$ ways to sample the same set of
  indexes $\{i_k: 1 \leq k \leq j\}$. We then have the following
  formula for the expectation of $\det(M)$:
  \begin{align*}
    \E \det(M) &= \sum_{S \in {[n]\choose j}}{j!\prod_{i \in
        S}{p_i}\det(X_{S,S})}
    = j!\sum_{S \in {[n]\choose j}}{\det((P^{1/2}XP^{1/2})_{S,S})}
    = j!e_j(\mu),
  \end{align*}
  where $P \eqdef \diag(p_1, \ldots, p_n)$, $\mu$
  is the vector of eigenvalues of the matrix $P^{1/2}XP^{1/2}$, and
  the final equality follows by~\eqref{eq:sym-dets}.

  Let us identify $L$ with a $d\times d$ matrix in the natural way. The matrix
  \[
  LVPV^TL^T = L\left(\sum_{i = 1}^n{p_i v_i v_i^T} \right)L^T = \frac{1}{C}LL^T
  \]
  has the same non-zero eigenvalues as $P^{1/2}XP^{1/2}$. Denoting the
  eigenvalues of $LL^T$ by $\lambda_1 \geq \ldots \geq \lambda_d$, we
  then have $\E\det(M) = j!e_j(\mu) = \frac{1}{C^{j}}
  j!e_j(\lambda)$. 

  Note that the entries of $\lambda$ (i.e.~the eigenvalues of $LL^T$)
  are equal to the squared singular values of $L$, and, therefore,
  $\|L\|_{HS}^2 = \|\lambda\|_1$ and $\|L\|_{2\to 2}^2 =
  \lambda_1$. To complete the proof, it remains to show that
  $e_j(\lambda) \geq j^{-j} \|\lambda\|_1^{j}$.  Because $e_j$ is
  Schur concave, it is enough to show that $\lambda \prec
  \bar{\lambda} \eqdef \Bigl(\frac{\|\lambda\|_1}{j}, \ldots,
  \frac{\|\lambda\|_1}{j}, 0, \ldots, 0\Bigr)$, where $\bar{\lambda}$
  has $j$ non-zero coordinates. Indeed, for any $k \leq j$, by the
  choice of $j$,
  \[
  \sum_{i = 1}^k{\lambda_i} \leq k \lambda_1 \leq
  k\frac{\|\lambda\|_1}{j} = \sum_{i = 1}^k{\bar{\lambda}_i}.
  \]
  For $j < k \leq d$, $\sum_{i = 1}^k{\lambda_i} \leq \|\lambda\|_1 =
  \sum_{i = 1}^k{\bar{\lambda}_i}$, with equality for $k =
  d$. Therefore, $\lambda \prec \bar{\lambda}$ and $e_j(\lambda) \geq
  e_j(\bar{\lambda}) = j^{-j} \|\lambda\|_1^{j}$. This finishes the
  proof of the main claim. 

  A set $S\subseteq [n]$ such that $M = X_{S, S}$ satisfies the
  conclusion of the theorem can be computed in deterministic
  polynomial time via the method of conditional expectations, as in
  the proof of Theorem~\ref{thm:jdim-derand}.
\end{proof}

Theorem~\ref{thm:rip-vol} is incomparable with
Theorem~\ref{thm:bt}. On one hand, the conclusion of
Theorem~\ref{thm:bt} is of a qualitatively stronger type: it implies a
lower bound on the smallest singular value of the matrix $M = (\langle
Lv_i, Lv_k\rangle)_{i, k \in S}$. On the other hand, the set $S$ in
Theorem~\ref{thm:rip-vol} can be as large as the (floor function of
the) robust rank $\|L\|_{HS}^2/\|L\|_{2\to 2}^2$, while this is in
general not possible in Theorem~\ref{thm:bt}.

\junk{We finish this section with a geometric application of
Theorem~\ref{thm:rip-vol}. A version of John's theorem~\cite{John48},
known as John's decomposition of the identity, implies that for any
convex body $K\subseteq \R^d$ which is symmetric around $0$ (i.e.~$-K
= K$), the largest ellipsoid contained in $K$ is $B_2^d$ if and only
if there exist \emph{contact points} $v_1, \ldots, v_n \in \partial K
\cap B_2^d$ and weights $c_1, \ldots, c_n\in \R_+$ such that $\sum_{i =
  1}^n{c_i v_i v_i^T} = I$. Here we use $\partial K$ to denote the
boundary of $K$. Applying Theorem~\ref{thm:rip-vol} to this
decomposition of the identity, we get the following corollary.

\begin{theorem}\label{thm:dv-rog}
  Let $K \subseteq \R^d$ be a convex body symmetric around $0$ so that
  the largest ellipsoid contained in $K$ is the unit Euclidean ball
  $B_2^d$. Let $L:\ell_2^d \to \ell_2^d$ be a linear operator. For any
  $j \leq \left\lfloor \frac{\|L\|_{HS}^2}{\|L\|_{2 \to
        2}}\right\rfloor$, there exist contact points $v_1, \ldots,
  v_j \in \partial K \cap B_2^d$ so that the polyhedron $P \eqdef \{x: |\langle
  x, Lv_i\rangle| \leq 1\ \ \forall 1 \leq i \leq j\}$ satisfies $K
  \subseteq P$ and 
  \[
  \vol_j(P \cap \mathcal{U}) \leq \frac{j^{j/2}}{\sqrt{j!}}
  \left(\frac{2\sqrt{d}}{\|L\|_{HS}}\right)^j \sim \frac{1}{\sqrt{2\pi
      j}} \left(\frac{2\sqrt{ed}}{\|L\|_{HS}}\right)^j,
  \]
  where $\mathcal{U} = \vspan\{Lv_i: 1 \leq i \leq j\}$. 
\end{theorem}
\begin{proof}
  Let us apply Theorem~\ref{thm:rip-vol} to $L$ and to the contact
  points $v_1, \ldots, v_n \in \partial K \cap B_2^d$ and the weights
  $c_1, \ldots, c_n \in \R_+$ given by John's decomposition. We can
  assume, without loss of generality, that the set $S$ in the
  conclusion of the theorem is $\{1, \ldots, j\}$. Let $A$ be the
  matrix whose column vectors are $Lv_1, \ldots, Lv_j$, and let $M =
  A^TA$. By Theorem~\ref{thm:rip-vol},
  \[
  \det(M) \geq \frac{j!\|L\|_{HS}^{2j}}{(jd)^j}.
  \]
  Let us write $M^{-1/2}$ for the positive definite square root of
  $M$. Then there exists an orthogonal $d \times j$ matrix $U$, $U^TU =
  I$, such that $UM^{-1/2}A^T$ is an orthogonal projection matrix onto
  $\mathcal{U}$. We can write
  \begin{align*}
    P &= \{x: |\langle x, Lv_i\rangle| \leq 1\ \ \forall 1 \leq i \leq
    j\}\\
    &= \{x: |\langle  A^Tx, b_i\rangle| \leq 1\ \ \forall 1 \leq i  \leq j\},
  \end{align*}
  where $b_i$ denotes the $i$-th standard basis vector. Substituting
  $z \eqdef A^Tx$, we have $UM^{-1/2}z = UM^{-1/2}A^Tx$. Since
  $UM^{-1/2}A^T$ acts as an orthogonal projection onto $\mathcal{U}$,
  for any $x \in \mathcal{U}$, $UM^{-1/2}z = UM^{-1/2}A^Tx =
  x$. Therefore,
  \begin{align*}
    P \cap \mathcal{U} &= \{x \in \mathcal{U}: |\langle  A^Tx, b_i\rangle| \leq 1\ \ \forall 1 \leq i  \leq j\},\\
    &= \{UM^{-1/2}z: |\langle  z, b_i\rangle| \leq 1\ \ \forall 1 \leq  i  \leq j\}\\
    &= UM^{-1/2}[-1, 1]^j.
  \end{align*}
  Because $U$ does not change the volume of $M^{-1/2}[-1, 1]^j$, we
  have $\vol_j(UM^{-1/2}[-1, 1]^j) = \frac{2^j}{\sqrt{\det(M)}}$. The
  proof is then completed by a calculation.
\end{proof}

Theorem~\ref{thm:dv-rog} is implied by the classical Dvoretzky-Rogers
Lemma~\cite{DvoretzkyRogers50} when $L$ is an orthogonal projection. Similar Dvoretzky-Rogers
type statements for more general $L$ were considered by
Vershynin~\cite{Vershynin01}. Using a variant of the RIP similar to Theorem~\ref{thm:bt}, he
showed that for $K$ as in Theorem~\ref{thm:dv-rog}, any
self-adjoint linear operator $L$ and a suitable $j$, there exist contact points $v_1,
\ldots, v_j \in \partial K \cap B_2^d$ so that $Lv_1, \ldots, Lv_j$
is close to an orthonormal basis. While the conclusion is stronger, the
assumptions in Vershynin's result are more restrictive: the upper bound on
$j$ is strictly smaller and $L$ is required to be self-adjoint. For
the special case of $L$ equal to the identity, and $j = d$,
Theorem~\ref{thm:dv-rog} was proved by Ball~\cite{Ball89} with an argument very similar
to ours.}

\section{Conclusion}

We have given a polynomial time deterministic algorithm that
approximates the $j$-MSD problem by a factor of $e^{j+o(j)}$, and,
therefore, the $j$-MVS problem by a factor of $e^{j/2 + o(j)}$. Our
algorithms use randomized rounding with a generalization of the
$D$-optimal design problem. The analysis relies on convex duality,
Schur convexity, and elementary properties of determinants.

We conjecture that approximating the $j$-MSD problem within a factor
of $e^{j - \epsilon}$ is $\NP$-hard for any $\epsilon > 0$. As an
easier problem, it will be interesting to construct an input for which
the $j$-L\"owner ellipsoid approximates $j$-MSD no better than a
factor of $e^j$, or to give a better analysis.

We also leave open the problem of computing a constant factor
approximation to the determinant lower bound on hereditary
discrepancy. 
Finally, it will be interesting to generalize Khachiyan's barycentric
coordinate descent algorithm for the $D$-optimal design problem to the
dual of the $j$-L\"owner ellipsoid problem.

\section*{Acknowledgements}

I thank Kunal Talwar and Nikhil Bansal for useful discussions.

\bibliographystyle{alpha}

\bibliography{Discrepancy,mypapers}

\begin{thebibliography}{DEFM14}

\bibitem[AS08]{AlonSpencer08}
Noga Alon and Joel~H. Spencer.
\newblock {\em The probabilistic method}.
\newblock Wiley-Interscience Series in Discrete Mathematics and Optimization.
  John Wiley \& Sons, Inc., Hoboken, NJ, third edition, 2008.

\bibitem[Bal89]{Ball89}
Keith Ball.
\newblock Volumes of sections of cubes and related problems.
\newblock In {\em Geometric aspects of functional analysis (1987--88)}, volume
  1376 of {\em Lecture Notes in Math.}, pages 251--260. Springer, Berlin, 1989.

\bibitem[BT87]{bour-tza}
J.~Bourgain and L.~Tzafriri.
\newblock Invertibility of large submatrices with applications to the geometry
  of banach spaces and harmonic analysis.
\newblock {\em Israel journal of mathematics}, 57(2):137--224, 1987.

\bibitem[BV04]{BoydV-cvx}
Stephen Boyd and Lieven Vandenberghe.
\newblock {\em Convex optimization}.
\newblock Cambridge University Press, Cambridge, 2004.

\bibitem[{\c{C}}M13]{CivrilM13}
Ali {\c{C}}ivril and Malik Magdon{-}Ismail.
\newblock Exponential inapproximability of selecting a maximum volume
  sub-matrix.
\newblock {\em Algorithmica}, 65(1):159--176, 2013.

\bibitem[DEFM14]{DiSummaEFM14}
Marco {Di Summa}, Friedrich Eisenbrand, Yuri Faenza, and Carsten Moldenhauer.
\newblock On largest volume simplices and sub-determinants.
\newblock {\em CoRR}, abs/1406.3512, 2014.
\newblock To appear in SODA 2015.

\bibitem[DR50]{DvoretzkyRogers50}
A.~Dvoretzky and C.~A. Rogers.
\newblock Absolute and unconditional convergence in normed linear spaces.
\newblock {\em Proc. Nat. Acad. Sci. U. S. A.}, 36:192--197, 1950.

\bibitem[DR10]{DeshpandeR10}
Amit Deshpande and Luis Rademacher.
\newblock Efficient volume sampling for row/column subset selection.
\newblock In {\em 51th Annual {IEEE} Symposium on Foundations of Computer
  Science, {FOCS} 2010, October 23-26, 2010, Las Vegas, Nevada, {USA}}, pages
  329--338. {IEEE} Computer Society, 2010.

\bibitem[GH62]{GH-tum}
Alain Ghouila-Houri.
\newblock Caract\'erisation des matrices totalement unimodulaires.
\newblock {\em C. R. Acad. Sci. Paris}, 254:1192--1194, 1962.

\bibitem[GK94]{GritzmannK94}
Peter Gritzmann and Victor Klee.
\newblock On the complexity of some basic problems in computational convexity:
  I. containment problems.
\newblock {\em Discrete Mathematics}, 136(1-3):129--174, 1994.

\bibitem[GKL95]{GritzmannKL95}
Peter Gritzmann, Victor Klee, and D.~G. Larman.
\newblock Largest j-simplices n-polytopes.
\newblock {\em Discrete {\&} Computational Geometry}, 13:477--515, 1995.

\bibitem[GLS81]{GLS-ellipsoid}
M.~Gr{\"o}tschel, L.~Lov{\'a}sz, and A.~Schrijver.
\newblock The ellipsoid method and its consequences in combinatorial
  optimization.
\newblock {\em Combinatorica}, 1(2):169--197, 1981.

\bibitem[GS12]{GuruswamiS12}
Venkatesan Guruswami and Ali~Kemal Sinop.
\newblock Optimal column-based low-rank matrix reconstruction.
\newblock In Yuval Rabani, editor, {\em Proceedings of the Twenty-Third Annual
  {ACM-SIAM} Symposium on Discrete Algorithms, {SODA} 2012, Kyoto, Japan,
  January 17-19, 2012}, pages 1207--1214. {SIAM}, 2012.

\bibitem[GT01]{GoreinovT01}
S.~A. Goreinov and E.~E. Tyrtyshnikov.
\newblock The maximal-volume concept in approximation by low-rank matrices.
\newblock In {\em Structured matrices in mathematics, computer science, and
  engineering, {I} ({B}oulder, {CO}, 1999)}, volume 280 of {\em Contemp.
  Math.}, pages 47--51. Amer. Math. Soc., Providence, RI, 2001.

\bibitem[Joh48]{John48}
Fritz John.
\newblock Extremum problems with inequalities as subsidiary conditions.
\newblock In {\em Studies and {E}ssays {P}resented to {R}. {C}ourant on his
  60th {B}irthday, {J}anuary 8, 1948}, pages 187--204. Interscience Publishers,
  Inc., New York, N. Y., 1948.

\bibitem[Kha95]{Khachiyan95}
Leonid Khachiyan.
\newblock On the complexity of approximating extremal determinants in matrices.
\newblock {\em J. Complexity}, 11(1):138--153, 1995.

\bibitem[Kha96]{Khachiyan96-John}
Leonid~G. Khachiyan.
\newblock Rounding of polytopes in the real number model of computation.
\newblock {\em Mathematics of Operations Research}, 21(2):307--320, 1996.

\bibitem[Kou06]{Koutis06}
Ioannis Koutis.
\newblock Parameterized complexity and improved inapproximability for computing
  the largest j-simplex in a {V}-polytope.
\newblock {\em Inf. Process. Lett.}, 100(1):8--13, 2006.

\bibitem[Lew95]{Lewis95}
A.~S. Lewis.
\newblock The convex analysis of unitarily invariant matrix functions.
\newblock {\em J. Convex Anal.}, 2(1-2):173--183, 1995.

\bibitem[LSV86]{LSV}
L.~Lov{\'a}sz, J.~Spencer, and K.~Vesztergombi.
\newblock Discrepancy of set-systems and matrices.
\newblock {\em European Journal of Combinatorics}, 7(2):151--160, 1986.

\bibitem[Mat98]{Matousek98-Lp-beckfiala}
Ji{\v{r}}{\'{\i}} Matou{\v{s}}ek.
\newblock An {$L_p$} version of the {B}eck-{F}iala conjecture.
\newblock {\em European J. Combin.}, 19(2):175--182, 1998.

\bibitem[Mat11]{Matousek11}
Ji\v{r}\'{i} Matou\v{s}ek.
\newblock The determinant bound for discrepancy is almost tight.
\newblock http://arxiv.org/abs/1101.0767, 2011.

\bibitem[NT14]{apx-disc}
Aleksandar Nikolov and Kunal Talwar.
\newblock Approximating discrepancy via small width ellipsoids.
\newblock To appear in SODA '15, 2014.

\bibitem[NTZ13]{NTZ}
Aleksandar Nikolov, Kunal Talwar, and Li~Zhang.
\newblock The geometry of differential privacy: the sparse and approximate
  cases.
\newblock In Dan Boneh, Tim Roughgarden, and Joan Feigenbaum, editors, {\em
  Symposium on Theory of Computing Conference, STOC'13, Palo Alto, CA, USA,
  June 1-4, 2013}, pages 351--360. {ACM}, 2013.
\newblock {Full version to appear in the STOC 2013 Special Issue of SICOMP}.

\bibitem[Pac04]{Packer04}
Asa Packer.
\newblock Polynomial-time approximation of largest simplices in v-polytopes.
\newblock {\em Discrete Applied Mathematics}, 134(1-3):213--237, 2004.

\bibitem[Roc70]{Rockafellar}
R.~Tyrrell Rockafellar.
\newblock {\em Convex analysis}.
\newblock Princeton Mathematical Series, No. 28. Princeton University Press,
  Princeton, N.J., 1970.

\bibitem[{Sch}23]{Schur23}
I.~{Schur}.
\newblock {\"Uber eine Klasse von Mittelbildungen mit Anwendungen auf die
  Determinantentheorie.}
\newblock {Sitzungsber. Berl. Math. Ges. 22, 9-20 (1923).}, 1923.

\bibitem[Sch03]{schrijver-combop-B}
Alexander Schrijver.
\newblock {\em Combinatorial optimization. {P}olyhedra and efficiency. {V}ol.
  {B}}, volume~24 of {\em Algorithms and Combinatorics}.
\newblock Springer-Verlag, Berlin, 2003.
\newblock Matroids, trees, stable sets, Chapters 39--69.

\bibitem[Sri97]{Srinivasan97}
Aravind Srinivasan.
\newblock Improving the discrepancy bound for sparse matrices: better
  approximations for sparse lattice approximation problems.
\newblock In {\em Proceedings of the {E}ighth {A}nnual {ACM}-{SIAM} {S}ymposium
  on {D}iscrete {A}lgorithms ({N}ew {O}rleans, {LA}, 1997)}, pages 692--701.
  ACM, New York, 1997.

\bibitem[SS10]{bt-constructive}
D.A. Spielman and N.~Srivastava.
\newblock An elementary proof of the restricted invertibility theorem.
\newblock {\em Israel Journal of Mathematics}, pages 1--9, 2010.

\bibitem[Ver01]{Vershynin01}
R.~Vershynin.
\newblock John's decompositions: selecting a large part.
\newblock {\em Israel J. Math.}, 122:253--277, 2001.

\end{thebibliography}

\end{document}